\newcommand{\driverOption}{}
  \renewcommand{\driverOption}{pdftex}
  \renewcommand{\driverOption}{dvips}
\newcommand{\hyperrefDriverOption}{}
	\renewcommand{\hyperrefDriverOption}{pdftex}
	\renewcommand{\hyperrefDriverOption}{hypertex}
	\newcommand{\TM}[1]{\marginpar{\parbox{4cm}{{\small {\bf TM:} #1}}}}
	\newcommand{\JN}[1]{\marginpar{\parbox{4cm}{{\small {\bf JN:} #1}}}}
	\newcommand{\TM}[1]{}
	\newcommand{\JN}[1]{}
\newtheorem{theorem}{Theorem}
\newtheorem{lemma}[theorem]{Lemma}
\theoremstyle{definition}
\theoremstyle{remark}
\newtheorem{remark}[theorem]{Remark}
\long\def\symbolfootnote[#1]#2{\begingroup
\def\thefootnote{\fnsymbol{footnote}}\footnote[#1]{#2}\endgroup}
\begin{document}

\begin{center}

\renewcommand{\thefootnote}{\fnsymbol{footnote}}

\LARGE Efficient computation of middle levels Gray codes\footnote{An extended abstract of this work has appeared in the proceedings of the European Symposium on Algorithms (ESA) 2015.}
\vspace{2mm}

\small

\begingroup
\begin{tabular}{l@{\hspace{2em}}l}
  \Large Torsten Mütze\footnotemark[2] & \Large Jerri Nummenpalo \\[2mm]
  School of Mathematics & Department of Computer Science \\
  Georgia Institute of Technology & ETH Zürich \\
  30332 Atlanta GA, USA & 8092 Zürich, Switzerland \vspace{.05mm} \\
  {\small {\tt muetze@math.gatech.edu}} & {\small {\tt njerri@inf.ethz.ch}}
\end{tabular}%
\footnotetext[2]{The author was supported by a fellowship of the Swiss National Science Foundation. 
This work was initiated when the author was at the Department of Computer Science at ETH Zürich.}%
\endgroup

\vspace{5mm}

\small

\begin{minipage}{0.8\linewidth}
\textsc{Abstract.}
For any integer $n\geq 1$ a \emph{middle levels Gray code} is a cyclic listing of all bitstrings of length $2n+1$ that have either $n$ or $n+1$ entries equal to 1 such that any two consecutive bitstrings in the list differ in exactly one bit. 
The question whether such a Gray code exists for every $n\geq 1$ has been the subject of intensive research during the last 30 years, and has been answered affirmatively only recently [T.~Mütze. Proof of the middle levels conjecture. \textit{Proc. London Math. Soc.}, 112(4):677--713, 2016]. 
In this work we provide the first efficient algorithm to compute a middle levels Gray code. 
For a given bitstring, our algorithm computes the next $\ell$ bitstrings in the Gray code in time $\cO(n\ell(1+\frac{n}{\ell}))$, which is $\cO(n)$ on average per bitstring provided that $\ell=\Omega(n)$.
\end{minipage}

\vspace{2mm}

\begin{minipage}{0.8\linewidth}
\textsc{Keywords:} Gray code, middle levels conjecture
\end{minipage}

\vspace{2mm}

\end{center}

\vspace{5mm}

\section{Introduction}
\label{sec:introduction}

Efficiently generating all objects in a particular combinatorial class (e.g.\ permutations, combinations, partitions or trees) in such a way that each object is generated exactly once is one of the oldest and most fundamental problems in the area of combinatorial algorithms, and such generation algorithms are used as core building blocks in a wide range of practical applications.
The survey \cite{Savage:1997} lists numerous references.
A classical example is the so-called binary \emph{Gray code} \cite{gray:patent}, which lists all $2^n$ bitstrings of length $n$ such that any two consecutive bitstrings differ in exactly one bit. 
A straightforward implemention of this algorithm takes time $\cO(n)$ to compute from a given bitstring the next one in the list (see Algorithm~G in \cite[Section 7.2.1.1]{knuth}), which can be improved to $\cO(1)$ \cite{MR0366085,MR0424386} (see Algorithm~L in \cite[Section 7.2.1.1]{knuth}).
The space requirement of both algorithms is $\cO(n)$.
Similar minimum-change generation algorithms have been developed for various other combinatorial classes. 
We exemplarily cite four examples from the excellent survey \cite{Savage:1997} on this topic: (1) listing all permutations of $\{1,2,\ldots,n\}$ so that consecutive permutations differ only by the swap of one pair of adjacent elements \cite{MR0159764,Trotter:1962} (see also \cite{Sedgewick:1977}),
(2) listing all $k$-element subsets of an $n$-element set such that consecutive sets differ only by exchanging one element \cite{MR0366085,MR0424386,EadesEtAl:1984,EadesMcKay:1984,Ruskey:1988},
(3) listing all binary trees with $n$ vertices so that consecutive trees differ only by one rotation operation \cite{Lucas:1987,LucasEtAl:1993},
(4) listing all spanning trees of a graph such that consecutive trees differ only by exchanging one edge \cite{Cummins:1966,holzmannHarary:1972}.

Coming back to Gray codes, we say that a bitstring of length $n$ has \emph{weight} $k$, if it has exactly $k$ entries equal to 1 and $n-k$ entries equal to 0.
Furthermore, for any integer $n\geq 1$ we define a \emph{middle levels Gray code} as a cyclic listing of all bitstrings of length $2n+1$ that have weight $n$ or $n+1$ such that any two consecutive bitstrings in the list differ in exactly one bit.
The name `middle levels' becomes clear when considering the relevant bitstrings as subsets in the Hasse diagram of the subset inclusion lattice.
Clearly, a middle levels Gray code has to visit $N:=\binom{2n+1}{n}+\binom{2n+1}{n+1}=2\binom{2n+1}{n}=2^{\Theta(n)}$ many bitstrings in total, and the weight of the bitstrings will alternate between $n$ and $n+1$ in every step.
The existence of a middle levels Gray code for every value of $n$ is asserted by the infamous \emph{middle levels conjecture}, which originated probably with Havel~\cite{MR737021} and Buck and Wiedemann~\cite{MR737262}, but has also been attributed to Dejter, Erd{\H{o}}s, Trotter~\cite{MR962224} and various others. 
It also appears as Exercise~56 in Knuth's book~\cite[Section~7.2.1.3]{knuth} and in the popular book \cite{MR2858033}.
This conjecture also became known as \emph{revolving door conjecture} for the following reason: Imagine a set of $2n+1$ people, split into two subsets of size $n$ and $n+1$ that are separated by a revolving door. 
The conjecture asks whether it is possible to move in each step one person from the larger group through the revolving door to the other side, such that every way to split the people into two subsets of size $n$ and $n+1$ is encountered exactly once.
It may come as a surprise that establishing the existence of a middle levels Gray code appears to be a difficult problem, given that by item~(2) above one can easily come up with a listing of all bitstrings of length $2n+1$ with weight exactly $n$ such that any two consecutive bitstrings differ in \emph{two} bits.

The middle levels conjecture has attracted considerable attention over the last 30 years \cite{savage:93,MR1350586,MR1329390,MR2046083,MR962223,MR962224,MR1268348,horakEtAl:05,Gregor20102448}. 
Until recently, middle levels Gray codes had only been found with brute-force computer searches for $n\leq 19$ \cite{MR1745213,MR2548541,shimada-amano}.
For $n=19$ this Gray code already consists of $N=137.846.528.820$ bitstrings.
A complete proof of the conjecture has only been announced very recently.

\begin{theorem}[\cite{MR3483129}]
\label{thm:middle-levels}
A middle levels Gray code exists for every $n\geq 1$.
\end{theorem}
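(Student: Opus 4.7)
The statement amounts to finding a Hamilton cycle in the bipartite graph $G_n$ whose vertices are all bitstrings of length $2n+1$ of Hamming weight $n$ or $n+1$ and whose edges join bitstrings at Hamming distance $1$. The plan is to construct a structured $2$-factor of $G_n$ and then merge its cycles one by one into a single cycle. Concretely, I would first define a $2$-factor $\mathcal{F}$ of $G_n$ from a pair of carefully chosen perfect matchings: for instance, let $M_1$ send each weight-$n$ string $x$ to the weight-$(n+1)$ string obtained by flipping the first $0$ of $x$ read left to right, and let $M_2$ be the analogous matching read right to left. Then $\mathcal{F}:=M_1\cup M_2$ is a $2$-regular spanning subgraph of $G_n$ whose cycles can be parameterized by combinatorial invariants, for example by orbits of the cyclic $\mathbb{Z}_{2n+1}$-action on bitstring positions or by Dyck-like lattice paths. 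This gives a concrete handle on both the number and the shape of the cycles of $\mathcal{F}$.

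Next I would introduce a local \emph{flip} move acting on a $2$-factor: given a $4$-cycle of $G_n$ whose two opposite edges lie on distinct cycles $C,C'$ of the current $2$-factor, swapping these two pairs of edges merges $C$ and $C'$ into a single cycle. Starting from $\mathcal{F}$, the aim is to apply a sequence of such flips, strictly reducing the number of cycles at each step, until only one cycle remains --- the desired Hamilton cycle. The relevant flips can naturally be indexed by edges between different cycles of $\mathcal{F}$ in an auxiliary ``cycle graph'', which reduces the problem to a connectivity statement about this graph.

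The heart of the argument, and the main obstacle, is to guarantee that this merging process can always continue: for every intermediate $2$-factor with more than one cycle one must exhibit a legal flip that strictly decreases the cycle count. I expect this to require a fine structural analysis of the cycles of $\mathcal{F}$ and of how they interact under flips, typically by setting up a bijection between cycles and combinatorial objects such as Dyck paths or plane trees, and then performing counting arguments that exploit the $\mathbb{Z}_{2n+1}$-symmetry of $G_n$ and the near-equal sizes of the two sides of the bipartition. Once this connectivity property of the flip process is established, the theorem follows by induction on the number of cycles of the current $2$-factor.
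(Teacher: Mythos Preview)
Your high-level strategy --- build a structured $2$-factor of the middle levels graph and then merge its cycles via local flip moves until a single Hamilton cycle remains --- is indeed the skeleton of the proof in \cite{muetze:14} (which this paper turns into an algorithm). However, your concrete instantiation breaks down immediately: the map $M_1$ that sends a weight-$n$ string to the string obtained by flipping its leftmost~$0$ is \emph{not} a perfect matching. For $n=1$ (length~$3$), both $100$ and $010$ are sent to $110$, so $M_1\cup M_2$ is not even $2$-regular. Any approach along these lines needs a genuinely bijective rule, and the known ones (e.g.\ lexical or modular matchings) produce $2$-factors whose cycle structure is far from obvious; the ``connectivity of the cycle graph'' step you flag as the main obstacle is precisely where all the work lies, and your proposal does not supply it.

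For comparison, the paper's $2$-factor $\cC_{2n+1}$ (equation~\eqref{eq:2-factor}) is not a union of two matchings but is assembled from a recursively constructed family of oriented paths $\cP_{2n}(n,n+1)$ in $Q_{2n}(n,n+1)$, glued to a reflected copy $\ol{\rev}(\cP_{2n}(n,n+1))$ via matching edges on the last coordinate. The payoff of this more elaborate construction is Lemma~\ref{lemma:2f-C1}: the cycles of $\cC_{2n+1}$ are in explicit bijection with plane trees $\cT_n$, via the map $h^{-1}$ and tree rotation. The merging step is not a $4$-cycle swap but the replacement of a \emph{flippable pair of paths} $(P,P')\in\cX_{2n}(n,n+1)$ by a flipped pair $(R,R')$ (Lemma~\ref{lemma:flip-paths}); these pairs are pairwise disjoint by construction, so the flips do not interfere. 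The connectivity statement you anticipated is then made completely concrete: one defines a graph $\cG_n$ on $\cT_n$ whose edges record available flips (encoded by two tree operations $\tau_1,\tau_2$), and Lemma~\ref{lemma:spanning-tree} exhibits an explicit spanning tree $\cH_n\subseteq\cG_n$. That lemma is the technical heart of the argument, and nothing in your outline comes close to a substitute for it.
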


\subsection{Our results}

Even though the proof of Theorem~\ref{thm:middle-levels} given in \cite{MR3483129} is constructive, a straightforward implementation takes exponential (in $n$) time and space to compute for a given bitstring the next one in the middle levels Gray code.
Essentially, we need to compute and store the entire list of $N=2^{\Theta(n)}$ bitstrings.
The main contribution of this paper is a time- and space-efficient algorithm to compute a middle levels Gray code for every $n\geq 1$, which can be considered an algorithmic proof of Theorem~\ref{thm:middle-levels}.
Specifically, given any bitstring of length $2n+1$ with weight $n$ or $n+1$, our algorithm computes the next $\ell$ bitstrings in the Gray code in time $\cO(n\ell(1+\frac{n}{\ell}))$, which is $\cO(n)$ on average per bitstring provided that $\ell=\Omega(n)$.
For most bitstrings the worst-case running time is $\cO(n)$, for few it is $\cO(n^2)$. 
The space requirement of our algorithm is $\cO(n)$.

An implementation of this algorithm in C++ can be found on the authors' websites \cite{www}, and we invite the reader to experiment with this code. 
We used it to compute a middle levels Gray code for $n=19$ in less than a day on an ordinary desktop computer. 
For comparison, the above-mentioned intelligent brute-force computation for $n=19$ from \cite{shimada-amano} took about 164 days using comparable hardware.

\begin{remark}
Clearly, the ultimate goal would be a generation algorithm with a worst-case running time of $\cO(1)$ per bitstring, but this would require substantial new ideas that would in particular yield a much simpler proof of Theorem~\ref{thm:middle-levels} than the one presented in \cite{MR3483129}.
In fact, in a recent paper \cite{DBLP:conf/soda/MutzeN17} that appeared after the submission of this manuscript, we were able to extend the ideas and techniques presented in this paper, and to improve our algorithm to achieve an optimal $\cO(1)$ worst-case running time per generated bitstring.
This improved algorithm builds on top of the algorithm presented in this paper.
\end{remark}

\begin{remark}
It was shown in \cite{MR3483129} that there are in fact double-exponentially (in $n$) many different middle levels Gray codes, which is easily seen to be best possible.
This raises the question whether our algorithm can be parametrized to compute any of these Gray codes. 
While this is possible in principle, choosing between doubly-exponentially many different Gray codes would require a parameter of exponential size, spoiling the above-mentioned runtime and space bounds. 
Moreover, it would introduce a substantial amount of additional complexity in the description and correctness proof of the algorithm. 
To avoid all this, our algorithm computes only one particular `canonical' middle levels Gray code. 
We will briefly come back to this point at the end of Section~\ref{sec:correctness} (Remark~\ref{remark:other-cycles}).
\end{remark}

\subsection{Outline of this paper}

In Section~\ref{sec:algorithm} we present the pseudocode of our middle levels Gray code algorithm.
In Section~\ref{sec:correctness} we prove the correctness of this algorithm, and in Section~\ref{sec:running-time} we discuss how to implement it to achieve the claimed runtime and space bounds.

\section{The algorithm}
\label{sec:algorithm}

It is convenient to reformulate our problem in graph-theoretic language: To this end we define the \emph{middle levels graph}, denoted by $Q_{2n+1}(n,n+1)$, as the graph whose vertices are all bitstrings of length $2n+1$ that have weight $n$ or $n+1$, with an edge between any two bitstrings that differ in exactly one bit.
In other words, bitstrings that may appear consecutively in the middle levels Gray code correspond to neighboring vertices in the middle levels graph. 
Clearly, computing a middle levels Gray code is equivalent to computing a Hamilton cycle in the middle levels graph. 
Throughout the rest of this paper we talk about middle levels Gray codes using this graph-theoretic terminology.

Our algorithm to compute a Hamilton cycle in the middle levels graph, i.e., to compute a middle levels Gray code, is inspired by the constructive proof of Theorem~\ref{thm:middle-levels} given in \cite{MR3483129}. 
Efficiency is achieved by reformulating this inductive construction as a recursive procedure. 
Even though the description of the algorithm in the present paper is completely self-contained and illustrated with several figures that highlight the main ideas, the reader may find it useful to first read an informal overview of the proof of Theorem~\ref{thm:middle-levels}, which can be found in \cite[Section~1.2]{MR3483129}.

Roughly speaking, our algorithm consists of a lower level function that computes sets of disjoint paths in the middle levels graph and several higher level functions that combine these paths to form a Hamilton cycle. 
In the following we explain these functions from bottom to top. 
Before doing so we introduce some notation that will be used throughout this paper.

\subsection{Basic definitions}
\label{sec:basic-defs}

\textit{Composition of mappings.}
We write the composition of mappings $f,g$ as $f\bullet g$, where $(f\bullet g)(x):=f(g(x))$.

\textit{Reversing/inverting and concatenating bitstrings.}
We define $\ol{0}:=1$ and $\ol{1}:=0$.
For any bitstring $x$ we let $\ol{\rev}(x)$ denote the bitstring obtained from $x$ by reversing the order of the bits and inverting every bit. 
Moreover, for any bitstring $x=(x_1,x_2,\ldots,x_{2n})$ we define $\pi(x):=(x_1,x_3,x_2,x_5,x_4,\ldots,x_{2n-1},x_{2n-2},x_{2n})$, i.e., except the first and last bit, all adjacent pairs of bits are swapped.
E.g., for $x=11001110$ we have $\ol{\rev}(x)=10001100$ and $\pi(x)=10110110$.
Here and throughout this paper, we omit commas and brackets when writing strings of 0's and 1's.
Note that the mappings $\ol{\rev}$, $\pi$ and consequently also $\ol{\rev}\bullet \pi$ are self-inverse, and that $\ol{\rev}\bullet \pi=\pi\bullet\ol{\rev}$.
For two bitstrings $x$ and $y$ we denote by $x\circ y$ the concatenation of $x$ and $y$. 
For any graph $G$ whose vertices are bitstrings and any bitstring $y$ we denote by $G\circ y$ the graph obtained from $G$ by attaching $y$ to every vertex of $G$.

\textit{Layers of the cube.}
For $n\geq 0$ and $k$ with $0\leq k\leq n$, we denote by $B_n(k)$ the set of all bitstrings of length $n$ with weight $k$.
For $n\geq 1$ and $k$ with $0\leq k\leq n-1$, we denote by $Q_n(k,k+1)$ the graph with vertex set $B_n(k)\cup B_n(k+1)$, with an edge between any two bitstrings that differ in exactly one bit.

\textit{Oriented paths, first/second/last vertices.}
An oriented path $P$ in a graph is a path with a particular orientation, i.e., we distinguish its first and last vertex. 
For an oriented path $P=(v_1,v_2,\ldots,v_\ell)$ we define its first, second and last vertex as $F(P):=v_1$, $S(P):=v_2$ and $L(P):=v_\ell$, respectively.

\textit{Bitstrings and Dyck paths.}
We often identify a bitstring $x$ with a lattice path in the integer lattice $\mathbb{Z}^2$ as follows, see the left hand side of Figure~\ref{fig:bij}:
Starting at the coordinate $(0,0)$, we read the bits of $x$ from left to right and interpret every 1-bit as an upstep that changes the current coordinate by $(+1,+1)$ and every 0-bit as a downstep that changes the current coordinate by $(+1,-1)$.
Specifically, we define the following sets of lattice paths:
\begin{itemize}[topsep=0mm,leftmargin=4mm]
\item For any $n\geq 0$ and $k\geq 0$ we denote by $D_n(k)$ the set of lattice paths with $k$ upsteps and $n-k$ downsteps ($n$ steps in total) that never move below the line $y=0$.
\item For $n\geq 1$ and $k\geq 0$ we define $D_n^{>0}(k)\seq D_n(k)$ as the set of lattice paths that have no point of the form $(x,0)$, $1\leq x\leq n$, and $D_n^{=0}(k)\seq D_n(k)$ as the set of lattice paths that have at least one point of the form $(x,0)$, $1\leq x\leq n$.
For $n=0$ we define $D_0^{=0}(0):=\{()\}$, where $()$ denotes the empty lattice path, and $D_0^{>0}(0):=\emptyset$.
We clearly have $D_n(k)=D_n^{=0}(k)\cup D_n^{>0}(k)$.
\item Furthermore, for $n\geq 1$ and $k\geq 0$ we let $D_n^-(k)$ denote the set of lattice paths with $k$ upsteps and $n-k$ downsteps ($n$ steps in total) that have exactly one point of the form $(x,-1)$, $1\leq x\leq n$.
\end{itemize}
Note that all these lattice paths end at the coordinate $(n,2k-n)$.

For example, we have
\begin{align*}
  D_6^{=0}(3) &= \{111000,110100,110010,101010,101100\} \enspace, \\
  D_6^{>0}(4) &= \{111100,111010,111001,110110,110101\} \enspace,  \\
  D_6^-(3)    &= \{110001,101001,100110,011100,011010\} \enspace.
\end{align*}
It is well known that in fact $|D_{2n}^{=0}(n)|=|D_{2n}^{>0}(n+1)|=|D_{2n}^-(n)|$ and that the size of these sets is given by the $n$-th Catalan number (see \cite{Chen20081328} and \cite{MR1676282}).
Depending on the values of $n$ and $k$ the above-mentioned sets of lattice paths might be empty, e.g., we have $D_{2n}^{>0}(n)=\emptyset$. 
Observe furthermore that the mappings $\ol{\rev}$, $\pi$ and therefore also $\ol{\rev}\bullet \pi$ map each of the sets $D_{2n}^{=0}(n)$ and $D_{2n}^-(n)$ onto itself (see \cite[Lemma~11]{MR3483129} for a formal proof).
E.g., we have $\ol{\rev}(\pi(111000))=111000$, $\ol{\rev}(\pi(110100))=101010$, $\ol{\rev}(\pi(110010))=110010$, $\ol{\rev}(\pi(101010))=110100$, $\ol{\rev}(\pi(101100))=101100$ and therefore $\ol{\rev}(\pi(D_6^{=0}(3)))=D_6^{=0}(3)$.

\subsection{Computing paths in \texorpdfstring{$Q_{2n}(n,n+1)$}{Q2n(n,n+1)}}
\label{sec:paths}

The algorithm $\Paths()$ is at the core of our Hamilton cycle algorithm. 
Its description is given in Algorithm~\ref{alg:paths}.
For simplicity let us ignore for the moment the parameter $\flip\in\{\true,\false\}$ and assume that it is set to $\false$. 
Then for every $n$ and $k$ with $1\leq n \leq k \leq 2n-1$ the algorithm $\Paths()$ constructs a set of disjoint oriented paths $\cP_{2n}(k,k+1)$ in the graph $Q_{2n}(k,k+1)$ in the following way: 
Given a vertex $x\in Q_{2n}(k,k+1)$ and the parameter $\dir\in\{\prevv,\nextv\}$, the algorithm computes a neighbor of $x$ on the path that contains the vertex $x$. 
The parameter $\dir$ controls the search direction, so for $\dir=\prevv$ we obtain the neighbor of $x$ that is closer to the first vertex of the path, and for $\dir=\nextv$ the neighbor that is closer to the last vertex of the path. 
If $x$ is a first vertex or a last vertex of one of the paths --- we will see momentarily what these are --- then the result of a call to $\Paths()$ with $\dir=\prevv$ or $\dir=\nextv$, respectively, is undefined and such calls will not be made from the higher level functions. 

The algorithm $\Paths()$ works recursively: For the base case of the recursion $n=1$ (lines~\ref{line:n1}--\ref{line:base-case1}) it computes neighbors for the set of paths 
\begin{align}
\label{eq:P212}
  \cP_2(1,2):=\{(10,11,01)\} \enspace,
\end{align}
which consists only of a single path on three vertices. 
For example, the result of $\Paths(1,1,10,\nextv,\false)$ is $11$ and the result of $\Paths(1,1,11,\prevv,\false)$ is $10$.
For the recursion step the algorithm considers the last two bits of the current vertex $x$ (see line~\ref{line:split-x}) and, depending on their values (see lines~\ref{line:xp10}, \ref{line:xp00}, \ref{line:xp01} and \ref{line:xp11}), either flips one of these two bits (see lines~\ref{line:xp00f}, \ref{line:xp01f1}, \ref{line:xp01f2}, \ref{line:xp01f3}, \ref{line:xp11f1}, \ref{line:xp11f2}), or recurses to flip one of the first $2n-2$ bits instead, leaving the last two bits unchanged (see lines~\ref{line:recurse-upper-layers}, \ref{line:recurse-10}, \ref{line:recurse-00}, \ref{line:recurse-01}, \ref{line:recurse-11}).
As already mentioned, the algorithm $\Paths()$ is essentially a recursive formulation of the inductive construction of paths described in \cite{MR3483129} (and recapitulated in Section~\ref{sec:recap-paths}), and the different cases in the algorithm reflect the different cases in this construction.
The recursion step in line~\ref{line:recurse-11} is where the mappings $\ol{\rev}$ and $\pi$ introduced in Section~\ref{sec:basic-defs} come into play (recall that $\ol{\rev}^{-1}=\ol{\rev}$ and $\pi^{-1}=\pi$).
To give another example, the result of $\Paths(3,3,111000,\nextv,\false)$ is $111001$, and the result of $\Paths(3,3,111001,\prevv,\false)$ is $111000$, so the set of paths $\cP_6(3,4)$ in the graph $Q_6(3,4)$ defined by the algorithm $\Paths()$ contains a path with the edge $(111000,111001)$.

\begin{algorithm}
\renewcommand\theAlgoLine{P\arabic{AlgoLine}}
\LinesNumbered
\DontPrintSemicolon
\SetEndCharOfAlgoLine{}
\SetNlSty{}{}{}
\SetArgSty{}
\SetKw{KwIf}{if}
\SetKw{KwElseIf}{else if}
\SetKw{KwElse}{else}
\SetKw{KwThen}{then}
\caption[Algorithm Paths()]{\mbox{$\Paths(n,k,x,\dir,\flip)$}}
\label{alg:paths}
\vspace{.2em}
\KwIn{Integers $n$ and $k$ with $1\leq n\leq k\leq 2n-1$, a vertex $x\in Q_{2n}(k,k+1)$, parameters $\dir\in\{\prevv,\nextv\}$ and $\flip\in\{\true,\false\}$} 
\KwOut{A neighbor of $x$ in $Q_{2n}(k,k+1)$} 
\vspace{.2em}
\If (\tcc*[f]{base cases}) {$n=1$ \label{line:n1}} { 
    depending on $x$ and $\dir$, \Return the previous/next neighbor of $x$ on $\cP_{2}(1,2)$ defined in \eqref{eq:P212} \label{line:base-case1}
}
\ElseIf {$n=k=2$ and $\flip=\true$  \label{line:n2}} {
    depending on $x$ and $\dir$, \Return the previous/next neighbor of $x$ on $\tcP_4(2,3)$ defined in \eqref{eq:tP423} \label{line:base-case2}
}
Split $x=(x_1,x_2,\ldots,x_{2n})$ into $x^-:=(x_1,x_2,\ldots,x_{2n-2})$ and $x^+:=(x_{2n-1},x_{2n})$ \label{line:split-x} \;
\If {$k \geq n+1$ \label{line:upper-layers-cond}}  {  
  \Return $\Paths(n-1,k-x_{2n-1}-x_{2n},x^-,\dir,\flip)\circ x^+$ \label{line:recurse-upper-layers} \;
}
\Else (\tcc*[f]{$k=n$}) {  \label{line:recurse-main-layer}
  \If {$x^+=10$ \label{line:xp10}} {
    \Return $\Paths(n-1,n-1,x^-,\dir,\flip)\circ x^+$ \label{line:recurse-10} \;
  } 
  \ElseIf {$x^+=00$ \label{line:xp00}} {
    \KwIf $x^- \in D_{2n-2}^{>0}(n)$ \KwThen \Return $x^-\circ 01$ \label{line:xp00f} \;
    \KwElse \Return $\Paths(n-1,n,x^-,\dir,\flip)\circ x^+$ \label{line:recurse-00}
  }
  \ElseIf {$x^+=01$ \label{line:xp01}} {
    \KwIf $x^- \in D_{2n-2}^{=0}(n-1)$ \KwThen \Return $x^-\circ 11$ \label{line:xp01f1} \;
    \KwElseIf $x^- \in D_{2n-2}^{-}(n-1)$ and $\dir=\nextv$ \KwThen \Return $x^-\circ 11$ \label{line:xp01f2} \;
    \KwElseIf $x^- \in D_{2n-2}^{>0}(n)$ and $\dir=\prevv$ \KwThen \Return $x^-\circ 00$ \label{line:xp01f3} \;
    \KwElse \Return $\Paths(n-1,n-1,x^-,\dir,\false)\circ x^+$ \label{line:recurse-01}
  }
  \ElseIf {$x^+=11$ \label{line:xp11}} {
    \KwIf $x^- \in D_{2n-2}^{=0}(n-1)$ and $\dir=\nextv$ \Return $x^-\circ 01$ \label{line:xp11f1} \;
    \KwElseIf $x^- \in D_{2n-2}^{-}(n-1)$ and $\dir=\prevv$ \KwThen \Return $x^-\circ 01$ \label{line:xp11f2} \;
    \KwElse \Return $\ol{\rev}\left(\pi\left(\Paths(n-1,n-1,\ol{\rev}^{-1}(\pi^{-1}(x^-)),\ol{\dir},\flip)\right)\right)\circ x^+$ where  $\ol{\dir}:=\prevv$ if $\dir=\nextv$ and $\ol{\dir}:=\nextv$ otherwise \label{line:recurse-11}
  }
} 
\end{algorithm}

We will later prove that the set of paths $\cP_{2n}(n,n+1)$ computed by the algorithm $\Paths()$ (with parameters $k=n$ and $\flip=\false$) has the following properties:
\begin{enumerate}[label=(\roman*),topsep=0mm,leftmargin=5mm]
\item All paths in $\cP_{2n}(n,n+1)$ are disjoint, and together they visit all vertices of the graph $Q_{2n}(n,n+1)$.
\item The sets of first and last vertices of the paths in $\cP_{2n}(n,n+1)$ are $D_{2n}^{=0}(n)$ and $D_{2n}^{-}(n)$, respectively.
\end{enumerate}
We only state these properties here; the proof will be provided in Section~\ref{sec:correctness}.
From (ii) we conclude that the number of paths in $\cP_{2n}(n,n+1)$ equals the $n$-th Catalan number.
E.g., the entire set of paths $\cP_4(2,3)$ computed by the algorithm $\Paths()$ is
\begin{subequations}
\label{eq:P423}
\begin{align}
  \cP_4(2,3) &= \{P,P'\} \enspace, \\
  P  &:= (1100,1101,0101,0111,0011,1011,1001) \enspace, \\
  P' &:= (1010,1110,0110) \enspace.
\end{align}
\end{subequations}

\begin{table}
\begin{tabular}{c*{5}{c}}
$P_1$ & $P_2$ & $P_3$ & $P_4$ & $P_5$ \\ \hline
111000	&	110100	&	110010	&	101010	&	101100	\\
111001	&	110101	&	110110	&	111010	&	111100	\\
011001	&	010101	&	010110	&	011010	&	011100	\\
011011	&	011101	&	011110	& & \\
010011	&	001101	&	001110	& & \\
010111	&	101101	&	101110	& & \\
000111	&	100101	&	100110	& &	\\
001111	&	100111	& & &	\\
001011	&	100011	& & &	\\
101011	&	110011	& & &	\\
101001	&	110001	& & &		
\end{tabular}
\caption{The five paths in the set $\cP_6(3,4)$ computed by the algorithm $\Paths()$ with parameter $\flip=\false$.
Each path is listed in a separate column, the first vertex from the set $D_6^{=0}(3)$ is at the top, and the last vertex from the set $D_6^-(3)$ is at the bottom.
These paths are disjoint and together visit all vertices of the graph $Q_6(3,4)$.
}
\label{tab:P634}
\end{table}

\begin{table}
\begin{tabular}{c*{4}{c}}
$R_1$ & $R_2$ & $R_3$ & $R_4$ \\ \hline
111000	&	110100	&	110010		&	101010	\\
111001	&	110101	&	111010		&	101110	\\
011001	&	010101	&	011010		&	001110	\\
011011	&	011101	&		&			011110	\\
001011	&	001101	&		&			010110	\\
001111	&	101101	&		&			110110	\\
000111	&	100101	&		&			100110	\\
010111	&	100111	&		&				\\
010011	&	100011	&		&				\\
110011	&	101011	&		&				\\
110001	&	101001	&		&				
\end{tabular}
\caption{The four paths in the set $\tcP_6(3,4)$ computed by the algorithm $\Paths()$ with parameter $\flip=\true$.
Comparing those paths to Table~\ref{tab:P634}, note that $(P_1,P_2)$ and $(P_3,P_4)$ are flippable pairs of paths and the corresponding flipped pairs of paths are $(R_1,R_2)$ and $(R_3,R_4)$, respectively.
Note that the paths in $\tcP_6(3,4)$ do not visit all vertices of $Q_6(3,4)$ (unlike the paths in $\cP_6(3,4)$), as the vertices of $P_5$ from Table~\ref{tab:P634} are not covered.
}
\label{tab:tP634}
\end{table}

To give another example, the set of paths $\cP_6(3,4)$ computed by the algorithm $\Paths()$ consists of the five paths in Table~\ref{tab:P634}.
We remark that the length of each of the paths in $\cP_{2n}(n,n+1)$ depends only on the length of the first `hill' of the lattice path corresponding to the first vertex.
In particular, not all paths have the same length.
The precise length formula will be given in Section~\ref{sec:running-time}, when we analyze the running time of our algorithm.
For the correctness of the algorithm the lengths of the paths are irrelevant.
We will see in the next section how to combine the paths $\cP_{2n}(n,n+1)$ in the graph $Q_{2n}(n,n+1)$ computed by our algorithm $\Paths()$ (called with $\flip=\false$) to compute a Hamilton cycle in the middle levels graph $Q_{2n+1}(n,n+1)$. 
One crucial ingredient we need for this is another set of paths $\tcP_{2n}(n,n+1)$, computed by calling the algorithm $\Paths()$ with $\flip=\true$. 
We shall see that the core `intelligence' of our Hamilton cycle algorithm consists of cleverly combining some paths from $\cP_{2n}(n,n+1)$ and some paths from $\tcP_{2n}(n,n+1)$ to a Hamilton cycle in the middle levels graph. 
We will see that from the point of view of all top-level routines that call the algorithm $\Paths()$ only the paths $\cP_{2n}(k,k+1)$ and $\tcP_{2n}(k,k+1)$ for $k=n$ are used, but the recursion clearly needs to compute these sets also for all other values $k=n+1,n+2,\ldots,2n-1$.

Specifically, calling the algorithm $\Paths()$ with $\flip=\true$ yields a set of paths $\tcP_{2n}(k,k+1)$ that differs from $\cP_{2n}(k,k+1)$ as follows:
First note that for certain inputs of the algorithm the value of the parameter $\flip$ is irrelevant, and the computed paths are the same regardless of its value.
This is true whenever the recursion ends in the base case in lines~\ref{line:n1}-\ref{line:base-case1}. 
One such example is the path $P_5$ in Table~\ref{tab:P634}.
Computing previous or next neighbors for any vertex on this path yields the same result regardless of the value of $\flip$.
Such paths are not part of $\tcP_{2n}(k,k+1)$, and we ignore them in the following.
The remaining paths from $\cP_{2n}(k,k+1)$ can be grouped into pairs and for every such pair $(P,P')$ the set $\tcP_{2n}(k,k+1)$ contains two disjoint paths $R$ and $R'$ that visit the same set of vertices as $P$ and $P'$, but that connect the end vertices of the paths the other way: Formally, denoting by $V(G)$ the vertex set of any graph $G$, these last two conditions can be captured as $V(P)\cup V(P')=V(R)\cup V(R')$, $F(P)=F(R)$, $F(P')=F(R')$, $L(P)=L(R')$ and $L(P')=L(R)$.

We refer to a pair $(P,P')$ satisfying these conditions as a \emph{flippable pair} of paths, and to the corresponding paths $(R,R')$ as a \emph{flipped pair}. 
This notion of flippable/flipped pairs is extremely valuable for designing our Hamilton cycle algorithm, as it allows the algorithm to decide \emph{independently} for each flippable pair of paths, whether to follow one of the original paths or the flipped paths in the graph $Q_{2n}(n,n+1)$.

The base case for the recursive computation of $\tcP_{2n}(k,k+1)$ is the set of paths $\tcP_4(2,3)$ in the graph $Q_4(2,3)$ defined by
\begin{subequations}
\label{eq:tP423}
\begin{align}
  \tcP_4(2,3) &:= \{R,R'\} \enspace, \\
  R  &:= (1100,1110,0110) \enspace, \\
  R' &:= (1010,1011,0011,0111,0101,1101,1001)
\end{align}
\end{subequations}
(see lines~\ref{line:n2}--\ref{line:base-case2}).
Observe that $(P,P')$ defined in \eqref{eq:P423} and $(R,R')$ satisfy precisely the conditions for flippable/flipped pairs of paths, i.e., $(P,P')$ is a flippable pair and $(R,R')$ is a corresponding flipped pair of paths.
To continue also the previous example, the set of paths $\tcP_6(3,4)$ computed by our algorithm is shown in Table~\ref{tab:tP634}.

\subsection{Computing a Hamilton cycle in the middle levels graph}
\label{sec:hamcyclenext}

The algorithm $\HamCycleNext()$ uses the paths $\cP_{2n}(n,n+1)$ and $\tcP_{2n}(n,n+1)$ in the graph $Q_{2n}(n,n+1)$ computed by the algorithm $\Paths()$ as described in the previous section to compute for a given vertex $x$ the next vertex on a Hamilton cycle in the middle levels graph $Q_{2n+1}(n,n+1)$. 
To simplify the exposition of the algorithm, let us ignore for the moment the parameter $\flip\in\{\true,\false\}$ and assume that it is set to $\false$, and let us also ignore line~\ref{line:check-flip}.
With these simplifications only paths from $\cP_{2n}(n,n+1)$ are considered, and those from $\tcP_{2n}(n,n+1)$ are ignored. 
Then instead of computing a Hamilton cycle the algorithm $\HamCycleNext()$ computes several smaller cycles that together visit all vertices of the middle levels graph.
We will correct this later by setting $\flip$ accordingly.

The algorithm $\HamCycleNext()$ is based on the following decomposition of the middle levels graph $Q_{2n+1}(n,n+1)$, see Figure~\ref{fig:2factor}:
By partitioning the vertices of the graph $Q_{2n+1}(n,n+1)$ according to the value of the last bit, we observe that it consists of a copy of the graph $Q_{2n}(n,n+1)\circ 0$ and a copy of the graph $Q_{2n}(n-1,n)\circ 1$ plus the set of edges $M_{2n+1}=\{(x\circ 0 ,x\circ 1 )\mid x\in B_{2n}(n)\}$ along which the last bit is flipped.
Observe furthermore that the graphs $Q_{2n}(n,n+1)$ and $Q_{2n}(n-1,n)$ are isomorphic, and that the mapping $\ol{\rev}$ is an isomorphism between these graphs. 
It is easy to check that this isomorphism preserves the sets of end vertices of the paths $\cP_{2n}(n,n+1)$:
Using property~(ii) mentioned in the previous section, we have $\ol{\rev}(D_{2n}^{=0}(n))=D_{2n}^{=0}(n)$ and $\ol{\rev}(D_{2n}^-(n))=D_{2n}^-(n)$ (in Figure~\ref{fig:2factor}, these sets are the black and white vertices).
By property~(i) of the paths $\cP_{2n}(n,n+1)$, we conclude that
\begin{equation}
\label{eq:2-factor}
  \cC_{2n+1}:=\cP_{2n}(n,n+1)\circ 0 \;\cup\; \ol{\rev}(\cP_{2n}(n,n+1))\circ 1 \;\cup\; M_{2n+1}'
\end{equation}
with $M_{2n+1}':=\{(x\circ 0,x\circ 1)\mid x\in D_{2n}^{=0}(n)\cup D_{2n}^-(n)\}\seq M_{2n+1}$ is a so-called \emph{2-factor} of the middle levels graph, i.e., a set of disjoint cycles that together visit all vertices of the graph. 
Note that along each of the cycles in the 2-factor, the paths from $\cP_{2n}(n,n+1)\circ 0$ are traversed in forward direction, and the paths from $\ol{\rev}(\cP_{2n}(n,n+1))\circ 1$ in backward direction. 
The algorithm $\HamCycleNext()$ (called with $\flip=\false$) computes exactly this 2-factor $\cC_{2n+1}$: Given a vertex $x$ of the middle levels graph, it computes the next vertex on one of the cycles from the 2-factor by checking the value of the last bit (line~\ref{line:last-bit0}), and by returning either the next vertex on the corresponding path from $\cP_{2n}(n,n+1)\circ 0$ (line~\ref{line:paths-next}) or the previous vertex on the corresponding path from $\ol{\rev}(\cP_{2n}(n,n+1))\circ 1$ (line~\ref{line:paths-prev}; recall that $\ol{\rev}^{-1}=\ol{\rev}$). 
The cases that the next cycle edge is an edge from $M_{2n+1}'$ receive special treatment: In these cases the last bit is flipped (lines~\ref{line:flip-last-bit1} and \ref{line:flip-last-bit0}).

\begin{figure}
\centering
\PSforPDF{
 \psfrag{cc}{\Large $\cC_{2n+1}$}
 \psfrag{q2n0}{$Q_{2n}(n,n+1)\circ 0$}
 \psfrag{q2n1}{$Q_{2n}(n-1,n)\circ 1$}
 \psfrag{q2np1}{$Q_{2n+1}(n,n+1)$}
 \psfrag{b2nnp1}{$B_{2n}(n+1)\circ 0$}
 \psfrag{b2n0}{$B_{2n}(n)\circ 0$}
 \psfrag{b2n1}{$B_{2n}(n)\circ 1$}
 \psfrag{b2nnm1}{$B_{2n}(n-1)\circ 1$}
 \psfrag{m2}{$M_{2n+1}$}
 \psfrag{pa}{$\cP_{2n}(n,n+1)$}
 \psfrag{pap}{$\ol{\rev}(\cP_{2n}(n,n+1))$}
 \includegraphics{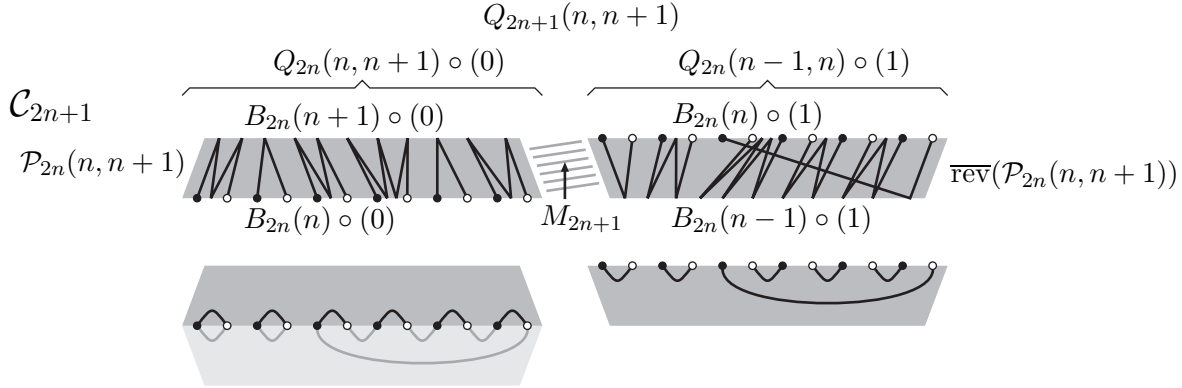}
}
\caption{The top part of the figure shows the decomposition of the middle levels graph and the definition~\eqref{eq:2-factor}. 
The 2-factor consists of three disjoint cycles that together visit all vertices of the graph. 
The first and last vertices of the paths are drawn in black and white, respectively. 
The bottom part of the figure shows a simplified drawing that helps analyzing the cycle structure of the 2-factor (it has two short cycles and one long cycle).}
\label{fig:2factor}
\end{figure}

\begin{algorithm}
\renewcommand\theAlgoLine{N\arabic{AlgoLine}}
\LinesNumbered
\DontPrintSemicolon
\SetEndCharOfAlgoLine{}
\SetNlSty{}{}{}
\SetArgSty{}
\SetKw{KwIf}{if}
\SetKw{KwElseIf}{else if}
\SetKw{KwElse}{else}
\SetKw{KwThen}{then}
\SetKw{KwWhile}{while}
\SetKw{KwDo}{do}
\caption[Algorithm HamCycleNext()]{\mbox{$\HamCycleNext(n,x,\flip)$}}
\label{alg:hamcyclenext}
\vspace{.2em}
\KwIn{An integer $n\geq 1$, a vertex $x\in Q_{2n+1}(n,n+1)$, state variable $\flip\in\{\true,\false\}$} 
\KwOut{Starting from $x$ the next vertex on a Hamilton cycle in $Q_{2n+1}(n,n+1)$, updated state variable $\flip\in\{\true,\false\}$}
\vspace{.2em}
Split $x=(x_1,x_2,\ldots,x_{2n+1})$ into $x^-:=(x_1,x_2,\ldots,x_{2n})$ and the last bit $x_{2n+1}$ \;
\If {$x_{2n+1}=0$ \label{line:last-bit0} } {
  \KwIf $x^- \in D_{2n}^{=0}(n)$ \KwThen
    \Return $\big(\Paths(n,n,x^-,\nextv,a)\circ x_{2n+1},a\big)$ where $a:=\IsFlipVertex(n,x^-)$ \label{line:check-flip} \;
  \KwElseIf $x^- \in D_{2n}^{-}(n)$ \KwThen
    \Return $(x^-\circ 1,\false)$ \label{line:flip-last-bit1} \;
  \KwElse
    \Return $\big(\Paths(n,n,x^-,\nextv,\flip)\circ x_{2n+1},\flip\big)$ \label{line:paths-next} \;
}
\Else (\tcc*[f]{$x_{2n+1}=1$} \label{line:last-bit1} ) {
  \KwIf $x^- \in D_{2n}^{=0}(n)$ \KwThen \Return $(x^-\circ 0,\false)$ \label{line:flip-last-bit0} \;
  \KwElse \Return $\big(\ol{\rev}\left(\Paths(n,n,\ol{\rev}^{-1}(x^-),\prevv,\false)\right)\circ x_{2n+1},\false\big)$ \label{line:paths-prev}
}
\end{algorithm}

As mentioned before, calling the algorithm $\HamCycleNext()$ with $\flip=\false$ yields the 2-factor in the middle levels graph defined in \eqref{eq:2-factor} that consists of more than one cycle.
However, we will show that by replacing in \eqref{eq:2-factor} some of the flippable pairs of paths from $\cP_{2n}(n,n+1)\circ 0$ by the corresponding flipped paths from the set $\tcP_{2n}(n,n+1)\circ 0$, which are computed by calling the algorithm $\Paths()$ with $\flip=\true$, we obtain a 2-factor that has only one cycle, i.e., a Hamilton cycle. 
The key insight is that replacing a pair of flippable paths that are contained in two different cycles of the 2-factor $\cC_{2n+1}$ by the corresponding flipped paths joins these two cycles to one cycle.
This is an immediate consequence of the definition of flippable/flipped pairs of paths.
This flipping is controlled by the parameter $\flip$ of the algorithm $\HamCycleNext()$: It decides whether to compute the next vertex on a path from $\cP_{2n}(n,n+1)\circ 0$ (if $\flip=\false$) or from $\tcP_{2n}(n,n+1)\circ 0$ (if $\flip=\true$). 
Note that these modifications do not affect the paths $\ol{\rev}(\cP_{2n}(n,n+1))\circ 1$ in the union \eqref{eq:2-factor}: For the corresponding instructions in lines~\ref{line:last-bit1}--\ref{line:paths-prev}, the value of $\flip$ is irrelevant.

The decision whether to follow a path from $\cP_{2n}(n,n+1)\circ 0$ or from $\tcP_{2n}(n,n+1)\circ 0$ is computed at the vertices $x\circ 0$, $x\in D_{2n}^{=0}(n)$, by calling the function $\IsFlipVertex(n,x)$ (line~\ref{line:check-flip}). 
This decision is returned to the caller and maintained until the last vertex of the corresponding path in the graph $Q_{2n}(n,n+1)\circ 0$ is reached.
Recall that by the definition of flippable pairs of paths, this decision can be made \emph{independently} for each flippable pair.
Of course it has to be consistent for both paths in a flippable pair: either both are flipped or none of them are.

\subsection{The top-level algorithm}
\label{sec:hamcycle}

The algorithm $\HamCycle(n,x,\ell)$ takes as input a vertex $x$ of the middle levels graph $Q_{2n+1}(n,n+1)$ and computes the next $\ell$ vertices that follow $x$ on a Hamilton cycle in this graph.
I.e., for $\ell\leq N=2\binom{2n+1}{n}$, every vertex appears at most once in the output, and for $\ell=N$ every vertex appears exactly once and the vertex $x$ comes last.
In terms of Gray codes, the algorithm takes a bitstring $x$ of length $2n+1$ that has weight $n$ or $n+1$ and outputs the $\ell$ subsequent bitstrings in a middle levels Gray code. 
A single call $\HamCycle(n,x,\ell)$ yields the same output as $\ell$ subsequent calls $x_{i+1}:=\HamCycle(n,x_i,1)$, $i=0,1,\ldots,\ell-1$, with $x_0:=x$. 
However, with respect to running times, the former is faster than the latter.

\begin{algorithm}
\renewcommand\theAlgoLine{H\arabic{AlgoLine}}
\LinesNumbered
\DontPrintSemicolon
\SetEndCharOfAlgoLine{}
\SetNlSty{}{}{}
\SetArgSty{}
\SetKw{KwIf}{if}
\SetKw{KwElseIf}{else if}
\SetKw{KwElse}{else}
\SetKw{KwThen}{then}
\SetKw{KwWhile}{while}
\SetKw{KwDo}{do}
\SetKw{KwOutput}{output}
\caption[Algorithm HamCycle()]{\mbox{$\HamCycle(n,x,\ell)$}}
\label{alg:hamcycle}
\vspace{.2em}
\KwIn{An integer $n\geq 1$, a vertex $x\in Q_{2n+1}(n,n+1)$, an integer $\ell\geq 1$}
\KwOut{Starting from $x$, the next $\ell$ vertices on a Hamilton cycle in $Q_{2n+1}(n,n+1)$}
\vspace{.2em}
Split $x=(x_1,x_2,\ldots,x_{2n+1})$ into $x^-:=(x_1,x_2,\ldots,x_{2n})$ and the last bit $x_{2n+1}$ \;
$\flip:=\false$ \label{line:init-flip-start} \;
\If (\tcc*[f]{initialize state variable $\flip$}) {$x_{2n+1}=0$} {
  $y:=x^-$ \;
  \While (\tcc*[f]{move backwards to first path vertex}) {$y\notin D_{2n}^{=0}(n)$ \label{line:while-not-start} } {
    $y:=\Paths(n,n,y,\prevv,\false)$ \label{line:go-back}
  }
  $\flip:=\IsFlipVertex(n,y)$ \label{line:init-flip-end}
}
$y:=x$ \label{line:hamcycle-loop-start} \;
\For (\tcc*[f]{Hamilton cycle computation}) {$i:=1$ \KwTo $\ell$ \label{line:hamcycle-for-loop}} {
  $(y,\flip):=\HamCycleNext(n,y,\flip)$ \label{line:call-hamcyclenext} \;
  \KwOutput $y$ \label{line:hamcycle-loop-end}
}
\end{algorithm}

The algorithm $\HamCycle()$ consists of an initialization phase (lines~\ref{line:init-flip-start}--\ref{line:init-flip-end}) in which the initial value of the state variable $\flip$ is computed. 
This is achieved by following the corresponding path from $\cP_{2n}(n,n+1)\circ 0$ backwards to its first vertex (lines~\ref{line:while-not-start}--\ref{line:go-back}) and by calling the function $\IsFlipVertex()$ (line~\ref{line:init-flip-end}). 
The actual Hamilton cycle computation (lines~\ref{line:hamcycle-loop-start}--\ref{line:hamcycle-loop-end}) repeatedly computes the subsequent cycle vertex and updates the state variable $\flip$ by calling the function $\HamCycleNext()$ discussed in the previous section (line~\ref{line:call-hamcyclenext}).

\subsection{Flip vertex computation}
\label{sec:isflipvertex}

To complete the description of our Hamilton cycle algorithm, it remains to specify the auxiliary function $\IsFlipVertex(n,x)$. 
As mentioned before, this function decides for each vertex $x\in D_{2n}^{=0}(n)$ whether our Hamilton cycle algorithm should follow the path from $\cP_{2n}(n,n+1)$ that starts with this vertex (return value $\false$) or the corresponding flipped path from $\tcP_{2n}(n,n+1)$ (return value $\true$) in the graph $Q_{2n}(n,n+1)\circ 0$ (recall \eqref{eq:2-factor}).
This function therefore nicely encapsulates the core `intelligence' of our algorithm so that it produces a 2-factor consisting only of a single cycle and not of several smaller cycles.
Admittedly, the definition of this function is rather technical, and we postpone the formal correctness proof until Section~\ref{sec:correctness}. 
We begin by introducing several new concepts related to trees.

\textit{Ordered rooted trees.}
An \emph{ordered rooted tree} is a rooted tree where the children of each vertex have a specified left-to-right ordering. 
We think of an ordered rooted tree as a tree embedded in the plane with the root on top, with downward edges leading from any vertex to its children, and the children appear in the specified left-to-right ordering.
This is illustrated on the right hand side of Figure~\ref{fig:bij}, where the root vertex is drawn boldly.
We denote by $\cT_n^*$ the set of all ordered rooted trees with $n$ edges. 
It is well known that the number of trees in $\cT_n^*$ is given by the $n$-th Catalan number (see \cite{MR1676282}).

\textit{Bijection between lattice paths and ordered rooted trees.}
We identify each lattice path (=bitstring) $x=(x_1,x_2,\ldots,x_{2n})$ from the set $D_{2n}^{=0}(n)$ with an ordered rooted tree from the set $\cT_n^*$ as follows (see the right hand side of Figure~\ref{fig:bij} and \cite{MR1676282}):
Starting with a tree that has only a root vertex, we read the bits of $x$ from left to right, and for every 1-bit we add a new rightmost child to the current vertex and move to this child, for every 0-bit we add no edge but simply move back to the parent of the current vertex (we clearly end up back at the root).
This construction defines a bijection between the lattice paths $D_{2n}^{=0}(n)$ and the ordered rooted trees $\cT_n^*$, and in the following we will repeatedly switch between these two representations.

\begin{figure}
\centering
\PSforPDF{
 \psfrag{x}{\parbox{5cm}{bitstring \\ $x=1101101000$}}
 \psfrag{p}{\parbox{3cm}{lattice path \\ from $D_{10}^{=0}(5)$}}
 \psfrag{t}{\parbox{4cm}{ordered rooted \\ tree from $\cT_5^*$}}
 \psfrag{z}{0}
 \psfrag{ten}{10}
 \includegraphics{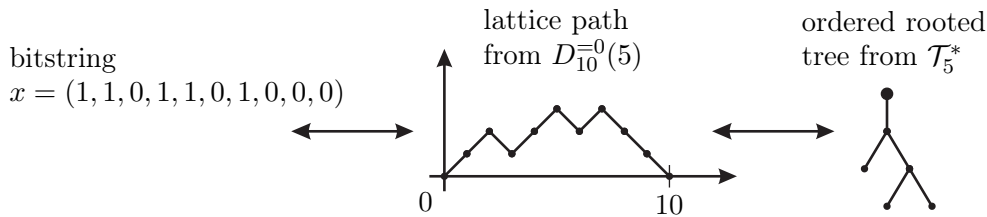}
}
\caption{Bijections between bitstrings and lattice paths (left hand side), and between lattice paths from $D_{2n}^{=0}(n)$ and ordered rooted trees from $\cT_n^*$ (right hand side).}
\label{fig:bij}
\end{figure}

\textit{Thin/thick leaves, clockwise/counterclockwise-next leaves.}
We call a leaf $u$ of a tree \emph{thin} or \emph{thick}, if the vertex adjacent to $u$ has degree exactly 2 or at least 3, respectively. 
Clearly, for any tree with at least two edges, every leaf is either thin or thick.
Given two leaves $u$ and $v$ of an ordered rooted tree $T$, we say that $v$ is the \emph{clockwise-next} leaf from $u$, or equivalently, $u$ is the \emph{counterclockwise-next} leaf from $v$, if all edges of $T$ not on the path $p$ from $u$ to $v$ lie to the right of the path $p$ when traversing $p$ from $u$ to $v$ in the given embedding of $T$, respectively.
Equivalently, in a (cyclic) depth-first search starting at the root, $u$ is the first leaf encountered after the leaf $v$.
This definition is illustrated in Figure~\ref{fig:tau12}.

\textit{Tree transformations $\tau_1$ and $\tau_2$.}
We define sets of ordered rooted trees $\cT_{n,1}^*,\cT_{n,2}^*\seq\cT_n^*$ and transformations $\tau_1,\tau_2$ operating on these trees as follows:
The set $\cT_{n,1}^*$ contains all ordered rooted trees $T$ with at least three edges for which the leftmost child $u'$ of the root $v$ has exactly one child $u$ and $u$ is a leaf.
In other words, the leftmost subtree of the root is a path on two edges, see the left hand side of Figure~\ref{fig:tau12}.
For such a tree $T$, we define $\tau_1(T)$ as the tree obtained by replacing the edge $(u,u')$ by $(u,v)$ so that $u$ becomes the leftmost child of the root.
In the bitstring representation, we have $T=1100\circ T'$ for some $T'\in\cT_{n-2}^*$, and $\tau_1(T)=1010\circ T'$.
The set $\cT_{n,2}^*$ contains all ordered rooted trees $T$ for which the rightmost child $u$ of the leftmost child $u'$ of the root is a thick leaf (so $u'$ has more children to the left of $u$), and the clockwise-next leaf $v$ of $u$ is thin.
Clearly, such a tree must have at least four edges, see the right hand side of Figure~\ref{fig:tau12}.
For such a tree $T$, we define $\tau_2(T)$ as the tree obtained by replacing the edge $(u,u')$ by $(u,v)$.
In the bitstring representation, we have
\begin{equation*}
  T=1\circ T_1\circ 1\circ T_2\circ 1\circ \cdots \circ T_{k+1}\circ 1100\circ 0^k\circ 10\circ 0\circ T_0
\end{equation*}
for some $k\geq 0$ and $T_i\in \cT_{n_i}^*$ for all $i=0,1,\ldots,k+1$, and
\begin{equation*}
  \tau_2(T)=1\circ T_1\circ 1\circ T_2\circ 1\circ \cdots \circ T_{k+1}\circ 111000\circ 0^k\circ 0\circ T_0 \enspace.
\end{equation*}
Note that the four sets $\cT_{n,1}^*$, $\cT_{n,2}^*$, $\tau_1(\cT_{n,1}^*)$ and $\tau_2(\cT_{n,2}^*)$ are all disjoint.

\begin{figure}
\centering
\PSforPDF{
 \psfrag{t}{$T$}
 \psfrag{tp}{$T'$}
 \psfrag{tau1}{$\tau_1(T)$}
 \psfrag{tau2}{$\tau_2(T)$}
 \psfrag{u}{$u$}
 \psfrag{up}{$u'$}
 \psfrag{v}{$v$}
 \psfrag{deg2}{$\deg=2$}
 \psfrag{deg3}{$\deg\geq 3$}
 \psfrag{cw}{$v$ is clockwise-next leaf from $u$}
 \psfrag{ccw}{$u$ is counterclockwise-next leaf from $v$}
 \includegraphics{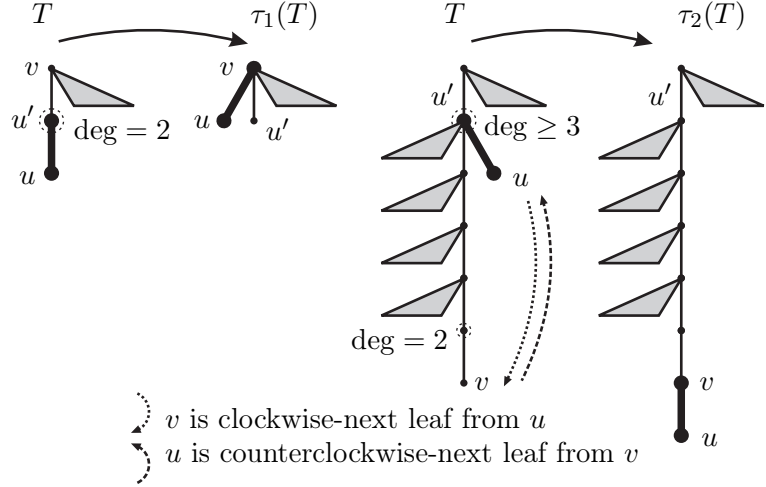}
}
\caption{Definition of the tree transformations $\tau_1$ (left) and $\tau_2$ (right). 
The edges in which the trees differ are drawn bold, and the grey areas represent arbitrary subtrees.}
\label{fig:tau12}
\end{figure}

\textit{The mapping $h$.}
For any lattice path $x\in D_{2n}^{=0}(n)$ we inductively define a lattice path $h(x)\in D_{2n}^{=0}(n)$ as follows:
If $x=()$ then we define
\begin{subequations}
\label{eq:def-h}
\begin{equation}
\label{eq:def-h-indbase}
  h(x):=() \enspace.
\end{equation}
Otherwise there is a unique partition $x=1\circ x_\ell\circ 0 \circ x_r$ such that $x_\ell\in D_{2k}^{=0}(k)$ for some $k\geq 0$ and we define
\begin{equation}
\label{eq:def-h-indstep}
  h(x):= 1\circ \pi(h(x_\ell))\circ 0\circ h(x_r) \enspace,
\end{equation}
where $\pi$ is the permutation defined in Section~\ref{sec:basic-defs}.
\end{subequations}
It is easy to check that $h$ defines a bijection on $D_{2n}^{=0}(n)$, and that the definition of the inverse mapping $h^{-1}$ can be derived from \eqref{eq:def-h} simply by replacing $\pi(h(x_\ell))$ in \eqref{eq:def-h-indstep} by $h^{-1}(\pi(x_\ell))$ and $h(x_r)$ by $h^{-1}(x_r)$.
Clearly, via the bijection between lattice paths and ordered rooted trees discussed before, the mapping $h$ can be interpreted as a bijection on the set $\cT_n^*$. 
As an example, Figure~\ref{fig:rot-trees} shows the effect of applying $h$ to all 14 trees in the set $\cT_4^*$.
The preimages under $h$ appear at the top of the figure, the corresponding images below.

With these definitions at hand, consider now the definition of the function $\IsFlipVertex(n,x)$.
It first computes the ordered rooted tree $T$ obtained by applying $h^{-1}$ to the bitstring/lattice path $x$ (line~\ref{line:h-inverse}).
On lines~\ref{line:tau_1-preimage}--\ref{line:tau_1-image} the function checks whether $T$ is a preimage or image under $\tau_1$ and calls an auxiliary function $\IsFlipTree_1()$ (defined below) with $T$ or $\tau_1^{-1}(T)$ as an argument, respectively.
A similar check is performed in lines~\ref{line:tau_2-preimage}--\ref{line:tau_2-image} with respect to $\tau_2$, using the auxiliary function $\IsFlipTree_2()$ (defined below).

\begin{algorithm}
\renewcommand\theAlgoLine{T\arabic{AlgoLine}}
\LinesNumbered
\DontPrintSemicolon
\SetEndCharOfAlgoLine{}
\SetNlSty{}{}{}
\SetArgSty{}
\SetKw{KwIf}{if}
\SetKw{KwElseIf}{else if}
\SetKw{KwElse}{else}
\SetKw{KwThen}{then}
\caption[Algorithm IsFlipVertex()]{\mbox{$\IsFlipVertex(n,x)$}}
\label{alg:isflipvertex}
\vspace{.2em}
\KwIn{An integer $n\geq 1$, a vertex $x\in D_{2n}^{=0}(n)$} 
\KwOut{$\{\true,\false\}$}
\vspace{.2em}
$T:=h^{-1}(x)$ \tcc*{compute corresponding rooted tree} \label{line:h-inverse}
\KwIf $T\in\cT_{n,1}^*$ \KwThen \Return $\IsFlipTree_1(T)$ \label{line:tau_1-preimage} \;
\KwElseIf $T\in\tau_1(\cT_{n,1}^*)$ \KwThen \Return $\IsFlipTree_1(\tau_1^{-1}(T))$ \label{line:tau_1-image} \;
\KwElseIf $T\in\cT_{n,2}^*$ \KwThen \Return $\IsFlipTree_2(T)$ \label{line:tau_2-preimage} \;
\KwElseIf $T\in\tau_2(\cT_{n,2}^*)$ \KwThen \Return $\IsFlipTree_2(\tau_2^{-1}(T))$ \label{line:tau_2-image} \;
\KwElse \Return $\false$ \;
\end{algorithm}

We proceed to define the auxiliary functions $\IsFlipTree_1()$ and $\IsFlipTree_2()$ that have return values $\{\true,\false\}$.
For brevity we define them in textual form, not as pseudocode. 
For the definitions we need to introduce the notion of tree rotation.

\textit{Rotation $\trot()$ of ordered rooted trees.}
For an ordered rooted tree $T\in\cT_n^*$ we let $\trot(T)$ denote the tree that is obtained from $T$ by shifting the root to the leftmost child of the root. 
In terms of lattice paths, considering the unique decomposition $T=1\circ T_\ell\circ 0\circ T_r$ with $T_\ell\in D_{2k}^{=0}(k)$ for some $k\geq 0$, we have $\trot(T)=T_\ell\circ 1\circ T_r\circ 0$.
Note that we obtain the same tree again after $2n/s$ such rotations, where the value of $s$ depends on the symmetries of the tree.
As an example, the top part of Figure~\ref{fig:rot-trees} shows the effect of applying the rotation operation to all 14 trees in the set $\cT_4^*$.

\textit{The function $\IsFlipTree_1(T)$:} Given a tree $T\in\cT_{n,1}^*$, compute all rotated versions of it that are also in the set $\cT_{n,1}^*$. 
Return $\true$ if $T$ is lexicographically smallest among them, and $\false$ otherwise.

\textit{The function $\IsFlipTree_2(T)$:} Given a tree $T\in\cT_{n,2}^*$, if $T=1^{n-1}\circ 0^{n-2}\circ 100$ or if $T$ has more than one thin leaf, return $\false$. 
Otherwise, let $v$ be the unique thin leaf of $T$, $v'$ the parent of $v$, $v''$ the parent of $v'$, and $w$ the clockwise-next leaf of $v$, and let $T'$ be the tree obtained from $T$ by replacing the edge $(v,v')$ by $(v,v'')$ so that $v$ becomes a child of $v''$ to the left of $v'$ (so $T'$ has only thick leaves), and by rotating it such that the leaf $w$ becomes the root.
Let $d$ be the distance between the root $w$ and the leftmost leaf $v$ of $T'$. 
Compute all other rotated versions of $T'$ for which the root is a leaf and the parent of the leftmost leaf $x$ has another leaf as its child to the right of $x$ (initially, $v$ and $v'$ are these two leaves). 
For each of them compute the distance $d'$ between the root and the leftmost leaf.
Return $\true$ if $d'\leq d$ for all of them and if $T'$ is lexicographically smallest among all rotated trees with $d'=d$, and $\false$ otherwise.

When referring to lexicographic comparisons between ordered rooted trees in the previous definitions we mean lexicographic comparisons between the corresponding bitstrings.

\section{Correctness of the algorithm}
\label{sec:correctness}

In this section we prove the correctness of the algorithm presented in the previous section.
Before doing this, we recapitulate the inductive construction of paths $\cP_n(k,k+1)$ and $\tcP_n(k,k+1)$ described in \cite{MR3483129} in Sections~\ref{sec:recap-paths} and \ref{sec:recap-flippable} below.
These sections are essentially duplicate material from \cite{MR3483129}, and readers familiar with this construction may want to skip them.
The main goal of Section~\ref{sec:correctness-paths} is to prove that those paths are the same as the ones produced by our recursive algorithm $\Paths()$, so this will justify denoting both with the same symbol $\cP_n(k,k+1)$ or $\tcP_n(k,k+1)$, respectively.
Once this is established, we apply several lemmas from \cite{MR3483129} to prove the correctness of the algorithm $\HamCycle()$ in Section~\ref{sec:correctness-hamcycle}.
When looking up those lemmas the reader should be aware of some minor differences in notation between \cite{MR3483129} and our paper:
We use bitstrings to also denote the corresponding lattice paths and ordered rooted trees, implicitly using the corresponding bijections.
In the paper \cite{MR3483129}, those bijections are made explicit and denoted $\varphi$ and $\psi$, and upsteps and downsteps of a lattice path are denoted by $\upstep$ and $\downstep$ rather than 1 and 0.

\subsection{Recap: Inductive construction of paths from \texorpdfstring{\cite{MR3483129}}{[M\"ut16]}}
\label{sec:recap-paths}

The construction is parametrized by some sequence $(\alpha_{2i})_{i\geq 1}$, $\alpha_{2i}\in\{0,1\}^{i-1}$.
Given this sequence, we inductively construct a set $\cP_{2n}(k,k+1)$ of disjoint oriented paths in $Q_{2n}(k,k+1)$ that start and end in $B_{2n}(k)$ for all $n\geq 1$ and all $k=n,n+1,\ldots,2n-1$ such that the following conditions hold:
\begin{enumerate}[label=(\alph*),topsep=0mm,leftmargin=7mm]
\item The paths in $\cP_{2n}(n,n+1)$ visit all vertices in the sets $B_{2n}(n+1)$ and $B_{2n}(n)$.
\item For $k=n+1,\ldots,2n-1$, the paths in $\cP_{2n}(k,k+1)$ visit all vertices in the set $B_{2n}(k+1)$, and the only vertices not visited in the set $B_{2n}(k)$ are exactly the elements in the set $S(\cP_{2n}(k-1,k))$.
\end{enumerate}

\textbf{Induction basis $n=1$ ($Q_2$):}
The induction basis is the set of paths $\cP_2(1,2):=\{(10,11,01)\}$, consisting only of a single oriented path on three vertices.
It is easily checked that this set of paths satisfies properties~(a) and (b) (property~(b) is satisfied trivially).

\textbf{Induction step $n\rightarrow n+1$ ($Q_{2n}\rightarrow Q_{2n+2}$), $n\geq 1$:}
The inductive construction consists of two intermediate steps.

\textit{First intermediate step: Construction of a 2-factor in the middle levels graph $Q_{2n+1}(n,n+1)$.}
Using only the paths in the set $\cP_{2n}(n,n+1)$ and the parameter $\alpha_{2n}=(\alpha_{2n}(1),\ldots,\alpha_{2n}(n-1))\in\{0,1\}^{n-1}$ we first construct a 2-factor in the middle levels graph $Q_{2n+1}(n,n+1)$.

The parameter $\alpha_{2i}$ is used to generalize the permutation $\pi$ introduced before:
Specifically, we let $\pi_{\alpha_{2n}}$ denote the permutation on the set $\{0,1\}^{2n}$ that swaps any two adjacent bits at positions $2i$ and $2i+1$ for all $i=1,2,\ldots,n-1$, if and only if $\alpha_{2n}(i)=1$, and that leaves the bits at position $1$ and $2n$ unchanged.
Note that if $\alpha_{2n}=(1,1,\ldots,1)$, then $\pi_{\alpha_{2n}}$ equals the mapping $\pi$ from before.
On the other hand, if $\alpha_{2n}=(0,0,\ldots,0)$, then no bits are swapped and $\pi_{\alpha_{2n}}=\id$ is simply the identity mapping.
For any bitstring $x\in \{0,1\}^{2n}$ we then define
\begin{equation} \label{eq:f-alpha}
  f_{\alpha_{2n}}(x):=\ol{\rev}(\pi_{\alpha_{2n}}(x)) \enspace.
\end{equation}
Note that we have $f_{\alpha_{2n}}=\ol{\rev}\bullet \pi$ for $\alpha_{2n}=(1,1,\ldots,1)$ and $f_{\alpha_{2n}}=\ol{\rev}$ for $\alpha_{2n}=(0,0,\ldots,0)$.

It was proved in \cite[Lemma~3]{MR3483129} that for any $\alpha_{2n}\in\{0,1\}^{n-1}$ the mapping $f_{\alpha_{2n}}$ maps each of the sets $F(\cP_{2n}(n,n+1))$ and $L(\cP_{2n}(n,n+1))$ onto itself.
As explained in Section~\ref{sec:hamcyclenext}, the middle layer graph $Q_{2n+1}(n,n+1)$ can be decomposed into the graphs $Q_{2n}(n,n+1)\circ 0$ and $Q_{2n}(n-1,n)\circ 1$ plus the edges from the matching $M_{2n+1}=\{(x\circ 0 ,x\circ 1 )\mid x\in B_{2n}(n)\}$ along which the last bit is flipped.
Denoting by $M_{2n+1}^{FL}$ the edges from $M_{2n+1}$ that have one end vertex in the set $\big(F(\cP_{2n}(n,n+1))\cup L(\cP_{2n}(n,n+1))\big)\circ 0\seq B_{2n}(n)\circ 0$ (and the other in the set $\big(F(\cP_{2n}(n,n+1))\cup L(\cP_{2n}(n,n+1))\big)\circ 1\seq B_{2n}(n)\circ 1$), the union
\begin{equation} \label{eq:2-factor-alpha}
  \cC_{2n+1}:=\cP_{2n}(n,n+1)\circ 0\cup f_{\alpha_{2n}}(\cP_{2n}(n,n+1))\circ 1\cup M_{2n+1}^{FL}
\end{equation}
therefore yields a 2-factor in the middle levels graph, with the property that on every cycle of $\cC_{2n+1}$, every edge of the form $(F(P),S(P))\circ 0$ for some $P\in\cP_{2n}(n,n+1)$ is oriented the same way.
Note that \eqref{eq:2-factor} is the special case of \eqref{eq:2-factor-alpha} with $\alpha_{2n}=(0,0,\ldots,0)$.
In the correctness proof for our algorithm we will later also consider the case $\alpha_{2n}=(1,1,\ldots,1)$.
Even though we are eventually only interested in the 2-factor $\cC_{2n+1}$ defined in \eqref{eq:2-factor-alpha}, we need to specify how to proceed with the inductive construction of the sets of paths $\cP_{2n+2}(k,k+1)$.

\textit{Second intermediate step: Splitting up the 2-factor into paths.}
We proceed by describing how the sets of paths $\cP_{2n+2}(k,k+1)$ for all $k=n+1,n+2,\ldots,2n+1$ satisfying properties~(a) and (b) are defined, using the previously constructed sets $\cP_{2n}(k,k+1)$ and the 2-factor $\cC_{2n+1}$ defined in the first intermediate step.

Consider the decomposition of $Q_{2n+2}$ into $Q_{2n}\circ 00$, $Q_{2n}\circ 10$, $Q_{2n}\circ 01$ and $Q_{2n}\circ 11$ plus the two perfect matchings $M_{2n+2}$ and $M_{2n+2}'$ along which the last and second to last bit are flipped, respectively.
For all $k=n+2,\ldots,2n+1$ we define
\begin{equation}
\label{eq:ind-step1-P}
\begin{split}
  \cP_{2n+2}(k,k+1) &:= \cP_{2n}(k,k+1)\circ 00\cup\cP_{2n}(k-1,k)\circ 10 \\
                    &\qquad \cup \cP_{2n}(k-1,k)\circ 01\cup\cP_{2n}(k-2,k-1)\circ 11  \enspace,
\end{split}
\end{equation}
where we use the convention $\cP_{2n}(2n,2n+1):=\emptyset$ and $\cP_{2n}(2n+1,2n+2):=\emptyset$ to unify treatment of the sets of paths $\cP_{2n+2}(2n,2n+1)$ and $\cP_{2n+2}(2n+1,2n+2)$ between the uppermost levels of $Q_{2n+2}$.
Note that so far none of the edges from the matchings $M_{2n+2}$ or $M_{2n+2}'$ is used.

The definition of the set $\cP_{2n+2}(n+1,n+2)$ is slightly more involved.
Note that the graph $Q_{2n+2}(n+1,n+2)$ can be decomposed into $Q_{2n+1}(n+1,n+2)\circ 0$ and $Q_{2n+1}(n,n+1)\circ 1$ plus the edges from $M_{2n+2}$ that connect the vertices in the set $B_{2n+1}(n+1)\circ 0$ to the vertices in the set $B_{2n+1}(n+1)\circ 1$.
The first graph can be further decomposed into $Q_{2n}(n+1,n+2)\circ 00$ and $Q_{2n}(n,n+1)\circ 10$ plus some matching edges that are not relevant here.
The second graph is the middle levels graph of $Q_{2n+1}\circ 1$.
Let $\cC_{2n+1}^-$ denote the graph obtained from the 2-factor $\cC_{2n+1}$ defined in \eqref{eq:2-factor-alpha} by removing every edge of the form $(F(P),S(P))\circ 0$ for some $P\in\cP_{2n}(n,n+1)$.
As on every cycle of $\cC_{2n+1}$ every such edge is oriented the same way, $\cC_{2n+1}^-$ is a set of paths (visiting all vertices of the middle levels graph $Q_{2n+1}(n,n+1)$), with the property that each of those paths starts at a vertex of the form $S(P)\circ 0$ and ends at a vertex of the form $F(P')\circ 0$ for two paths $P,P'\in\cP_{2n}(n,n+1)$.
Denoting by $M_{2n+2}^S$ the edges from $M_{2n+2}$ that have one end vertex in the set $S(\cP_{2n}(n,n+1))\circ 00\seq B_{2n}(n+1)\circ 00$ (and the other in the set $S(\cP_{2n}(n,n+1))\circ 01\seq B_{2n}(n+1)\circ 01$), it follows that
\begin{equation}
\label{eq:new-paths}
  \cP_{2n+2}^*:=M_{2n+2}^S\cup \cC_{2n+1}^-\circ 1
\end{equation}
is a set of oriented paths that start and end in $B_{2n+2}(n+1)$, where we choose the orientation of each path such that the edge from the set $M_{2n+2}^S$ is the first edge.
Note that we have
\begin{equation}
\label{eq:new-paths-F}
  F(\cP_{2n+2}^*)=S(\cP_{2n}(n,n+1))\circ 00 \enspace.
\end{equation}
We then define
\begin{equation}
\label{eq:ind-step2-P}
  \cP_{2n+2}(n+1,n+2):=\cP_{2n}(n+1,n+2)\circ 00\cup\cP_{2n}(n,n+1)\circ 10\cup \cP_{2n+2}^* \enspace,
\end{equation}
where in the case $n=1$ we use the convention $\cP_2(2,3):=\emptyset$.

It was argued in \cite[Section~2.2]{MR3483129} that the sets of paths $\cP_{2n+2}(k,k+1)$, $k=n+1,n+2,\ldots,2n+1$, defined in \eqref{eq:ind-step1-P} and \eqref{eq:ind-step2-P} have properties~(a) and (b).
We omit those arguments here.

\subsection{Recap: Inductive construction of flippable pairs from \texorpdfstring{\cite{MR3483129}}{[M\"ut16]}}
\label{sec:recap-flippable}

Let $\cP_{2n}(k,k+1)$ be the sets of paths defined in Section~\ref{sec:recap-paths} for an arbitrary parameter sequence $(\alpha_{2i})_{i\geq 1}$, $\alpha_{2i}\in\{0,1\}^{i-1}$.
In the following we show how to inductively construct a set of flippable pairs $\cX_{2n}(k,k+1)$ for the set $\cP_{2n}(k,k+1)$ for all $n\geq 2$ and all $k=n,n+1,\ldots,2n-1$, and the corresponding flipped paths $\tcP_{2n}(k,k+1)$.
So the set $\cX_{2n}(k,k+1)$ contains pairs of paths $(P,P')$, $P,P'\in \cP_{2n}(k,k+1)$, that form a flippable pair, where every path appears in at most one such pair, and the set $\tcP_{2n}(k,k+1)$ is the union of all paths $R,R'$ that form a corresponding flipped pair $(R,R')$ for $(P,P')$.
This construction arises very naturally from the inductive construction of the sets $\cP_{2n}(k,k+1)$ given in the previous section.

\textbf{Induction basis $n=2$ ($Q_4$):}
It is easy to check that the set of paths $\cP_4(2,3)$ arising from the inductive construction described in the previous section after one step satisfies $\cP_4(2,3)=\{P,P'\}$ with $P$ and $P'$ as defined in \eqref{eq:P423} (as $\alpha_2=()$ this step does not involve any parameter choices yet).
The set $\cX_4(2,3):=\{(P,P')\}$ is a set of flippable pairs for $\cP_4(2,3)$, which can be seen by considering the flipped paths $(R,R')$ with $R$ and $R'$ as defined in \eqref{eq:tP423}.
For completeness we also define $\cX_4(3,4):=\emptyset$, which is trivially a set of flippable pairs for the set of paths $\cP_4(3,4)$ constructed before.
The corresponding flipped paths are therefore $\tcP_4(2,3):= \{R,R'\}$ and $\tcP_4(3,4):=\emptyset$.

\textbf{Induction step $n\rightarrow n+1$ ($Q_{2n}\rightarrow Q_{2n+2}$), $n\geq 2$:}
Consider the sets of flippable pairs $\cX_{2n}(k,k+1)$, $k=n,n+1,\ldots,2n-1$, for the sets of paths $\cP_{2n}(k,k+1)$, and the corresponding flipped paths $\tcP_{2n}(k,k+1)$.
In the following we describe how to use them to construct sets of flippable pairs $\cX_{2n+2}(k,k+1)$, $k=n+1,n+2,\ldots,2n+1$, for the sets $\cP_{2n+2}(k,k+1)$ in $Q_{2n+2}(k,k+1)$, and corresponding flipped paths $\tcP_{2n}(k,k+1)$.

For all $k=n+2,\ldots,2n+1$ we define, in analogy to \eqref{eq:ind-step1-P},
\begin{equation}
\label{eq:ind-step1-flipp}
\begin{split}
  \cX_{2n+2}(k,k+1) &:= \cX_{2n}(k,k+1)\circ 00\cup\cX_{2n}(k-1,k)\circ 10 \\
                    &\qquad \cup \cX_{2n}(k-1,k)\circ 01\cup\cX_{2n}(k-2,k-1)\circ 11 \enspace,
\end{split}
\end{equation}
where we use the convention $\cX_{2n}(2n,2n+1):=\emptyset$ and $\cX_{2n}(2n+1,2n+2):=\emptyset$ to unify treatment of the sets of flippable pairs $\cX_{2n+2}(2n,2n+1)$ and $\cX_{2n+2}(2n+1,2n+2)$ between the uppermost levels of $Q_{2n+2}$.
The sets of flippable pairs on the right hand side of \eqref{eq:ind-step1-flipp} clearly lie in four disjoint subgraphs of $Q_{2n+2}(k,k+1)$, so by induction $\cX_{2n+2}(k,k+1)$ is indeed a set of flippable pairs for $\cP_{2n+2}(k,k+1)$.

The corresponding flipped paths are defined as
\begin{equation*}
\begin{split}
  \tcP_{2n+2}(k,k+1) &:= \tcP_{2n}(k,k+1)\circ 00\cup\tcP_{2n}(k-1,k)\circ 10 \\
                     &\qquad \cup \tcP_{2n}(k-1,k)\circ 01\cup\tcP_{2n}(k-2,k-1)\circ 11 \enspace,
\end{split}
\end{equation*}
where we use the convention $\tcP_{2n}(2n,2n+1):=\emptyset$ and $\tcP_{2n}(2n+1,2n+2):=\emptyset$.

To define the set $\cX_{2n+2}(n+1,n+2)$, we consider the oriented paths $\cP_{2n+2}^*$ defined in \eqref{eq:new-paths} (recall that these paths originate from splitting up the 2-factor $\cC_{2n+1}$ defined in \eqref{eq:2-factor-alpha}).
By \eqref{eq:2-factor-alpha} and \eqref{eq:new-paths}, every path $P^+\in\cP_{2n+2}^*$ has the following structure:
There are two paths $P,\Phat\in\cP_{2n}(n,n+1)$ with $f_{\alpha_{2n}}(L(\Phat))=L(P)$ such that $P^+$ contains all edges except the first one from $P\circ 01$ and all edges from $f_{\alpha_{2n}}(\Phat)\circ 11$ ($P^+$ has three more edges, two from the matching $M_{2n+1}^{FL}\circ 1$ and one from the matching $M_{2n+2}^S$).
By this structural property of paths from $\cP_{2n+2}^*$ and the fact that $f_{\alpha_{2n}}$ is an isomorphism between the graphs $Q_{2n}(n,n+1)$ and $Q_{2n}(n-1,n)$, the set
\begin{equation}
\label{eq:new-flipp-paths}
\begin{split}
  \cX_{2n+2}^* &:= \big\{(P^+,P^{+'}) \mid P^+,P^{+'}\in\cP_{2n+2}^* \text{ and there is a flippable pair } (\Phat,\Phat')\in\cX_{2n}(n,n+1) \\
               &\hspace{3cm} \text{ with } f_{\alpha_{2n}}(\Phat)\circ 11 \seq P^+ \text{ and } f_{\alpha_{2n}}(\Phat')\circ 11 \seq P^{+'} \big\} \enspace.
\end{split}
\end{equation}
is a set of flippable pairs for $\cP_{2n+2}^*$.
We now define, in analogy to \eqref{eq:ind-step2-P},
\begin{equation}
\label{eq:ind-step2-flipp}
  \cX_{2n+2}(n+1,n+2):=\cX_{2n}(n+1,n+2)\circ 00\cup\cX_{2n}(n,n+1)\circ 10\cup \cX_{2n+2}^* \enspace.
\end{equation}
The sets of flippable pairs on the right hand side of \eqref{eq:ind-step2-flipp} lie in three disjoint subgraphs of $Q_{2n+2}(n+1,n+2)$, so by induction $\cX_{2n+2}(n+1,n+2)$ is indeed a set of flippable pairs for $\cP_{2n+2}(n+1,n+2)$.

The corresponding flipped paths are defined as
\begin{equation}
\label{eq:ind-step2-flipped}
  \tcP_{2n+2}(n+1,n+2):=\tcP_{2n}(n+1,n+2)\circ 00\cup\tcP_{2n}(n,n+1)\circ 10\cup \tcP_{2n+2}^* \enspace,
\end{equation}
where $\tcP_{2n+2}^*$ is the union of all paths obtained by considering every flippable pair $(P^+,P^{+'})$ in the set $\cX_{2n+2}^*$ defined in \eqref{eq:new-flipp-paths} and by modifying $P^+$ and $P^{+'}$ as follows:
Let $(\Phat,\Phat')\in\cX_{2n}(n,n+1)$ be such that $f_{\alpha_{2n}}(\Phat)\circ 11 \seq P^+$ and $f_{\alpha_{2n}}(\Phat')\circ 11 \seq P^{+'}$ (see the right hand side of \eqref{eq:new-flipp-paths}) and let $\Pbar$ and $\Pbar'$ be the corresponding flipped paths from $\tcP_{2n}(n,n+1)$ so that $F(\Phat)=F(\Pbar)$, $F(\Phat')=F(\Pbar')$, $L(\Phat)=L(\Pbar')$, $L(\Phat')=L(\Pbar)$.
We replace the subpath $f_{\alpha_{2n}}(\Phat)\circ 11$ of $P^+$ by the subpath $f_{\alpha_{2n}}(\Pbar')\circ 11$ and the last edge $(F(\Phat)\circ 11,F(\Phat)\circ 01)$ of $P^+$ by the edge $(F(\Pbar')\circ 11,F(\Pbar')\circ 01)$. 
Similarly, we replace the subpath $f_{\alpha_{2n}}(\Phat')\circ 11$ of $P^{+'}$ by the subpath $f_{\alpha_{2n}}(\Pbar)\circ 11$ and the last edge $(F(\Phat')\circ 11,F(\Phat')\circ 01)$ of $P^{+'}$ by the edge $(F(\Pbar)\circ 11,F(\Pbar)\circ 01)$.
Clearly, the resulting paths are a flipped pair of paths corresponding to the flippable pair of paths $(P^+,P^{+'})$ that was added to $\cX_{2n+2}^*$.

\subsection{Correctness of the algorithm \texorpdfstring{$\Paths()$}{Paths()}}
\label{sec:correctness-paths}

In this section we show that the paths $\cP_{2n}(k,k+1)$ and $\tcP_{2n}(k,k+1)$ in the graph $Q_{2n}(k,k+1)$ computed by the algorithm $\Paths()$ that form the basic building blocks of our Hamilton cycle algorithm (recall \eqref{eq:2-factor}) are exactly the same as the ones defined in Sections~\ref{sec:recap-paths} and \ref{sec:recap-flippable}.
We shall see that essentially, the algorithm $\Paths()$ is a recursive reformulation of this inductive construction.
This allows us to exploit all the properties proved in \cite{MR3483129} about these paths.
In particular, we establish that the paths $\cP_{2n}(n,n+1)$ satisfy properties~(i) and (ii) mentioned in Section~\ref{sec:paths}.
Recall that all paths we consider are oriented from the first vertex to the last, and the notions of previous and next vertices on a path are defined with respect to this orientation.

\begin{lemma}
\label{lemma:paths-original}
Let $n\geq 1$ and $k\in\{n,n+1,\ldots,2n-1\}$ be fixed, and let $\cP_{2n}(k,k+1)$ be the set of paths in $Q_{2n}(k,k+1)$ defined in Section~\ref{sec:recap-paths} for the parameter sequence $\alpha_{2i}=(1,1,\ldots,1)\in \{0,1\}^{i-1}$, $i=1,2,\ldots,n-1$.
For any path $P\in\cP_{2n}(k,k+1)$ and any two consecutive vertices $u$ and $v$ on $P$ (i.e., $P$ has the form $P=(\ldots,u,v,\ldots)$) we have $v=\Paths(n,k,u,\nextv,\false)$ and $u=\Paths(n,k,v,\prevv,\false)$.
\end{lemma}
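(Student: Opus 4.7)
The plan is to prove the lemma by induction on $n$. The base case $n=1$ is immediate from direct inspection: $\cP_2(1,2)$ consists of the single path $((1,0),(1,1),(0,1))$, and lines~\ref{line:n1}--\ref{line:base-case1} hard-code the previous/next neighbor on this path for the two admissible queries. For the inductive step, fix $n \geq 2$, fix $k\in\{n,\dots,2n-1\}$, and assume the statement for $n-1$ (for all admissible parameters, and in particular for $\flip=\false$). Take any $P\in\cP_{2n}(k,k+1)$ and two consecutive vertices $u,v$ on $P$; we must show that the recursion in $\Paths()$ returns $v$ from $(n,k,u,\nextv,\false)$ and $u$ from $(n,k,v,\prevv,\false)$. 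By symmetry in the orientation of $P$, it suffices to verify the $\nextv$ direction in each case, which is what we concentrate on.

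The first dichotomy in the algorithm is $k\geq n+1$ versus $k=n$. For $k\geq n+1$ the construction of \cite[Section~2.2]{muetze:14} builds $\cP_{2n}(k,k+1)$ by taking, for each fixed value of $x^+=(x_{2n-1},x_{2n})$, the paths from $\cP_{2n-2}(k-x_{2n-1}-x_{2n},k-x_{2n-1}-x_{2n}+1)$ and appending the constant suffix $x^+$. Consequently two consecutive vertices $u,v$ on $P\in\cP_{2n}(k,k+1)$ agree on their last two bits, and so the neighbor computation splits off $x^+$ as a trailing block and recurses on the prefix. This is exactly what line~\ref{line:recurse-upper-layers} does, so the inductive hypothesis applied to $n-1$ completes this case.

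For the middle layer $k=n$, the construction from \cite[Section~2.2]{muetze:14} glues lower-dimensional path pieces across different values of $x^+$ by a fixed set of gluing rules, parametrized by $\alpha_{2n-2}=(1,1,\dots,1)$. The four cases of the algorithm (lines~\ref{line:xp10}, \ref{line:xp00}, \ref{line:xp01}, \ref{line:xp11}) correspond to the four choices of $x^+\in\{(1,0),(0,0),(0,1),(1,1)\}$, and within each case the subconditions on $x^-$ correspond to the gluing rules in \cite{muetze:14}, which connect an endpoint of a path in $\cP_{2n-2}(\cdot,\cdot)\circ x^+$ to an endpoint of a path in $\cP_{2n-2}(\cdot,\cdot)\circ x'^+$ for a different suffix $x'^+$. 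Specifically, the sets $D_{2n-2}^{=0}(n-1)$, $D_{2n-2}^{-}(n-1)$, and $D_{2n-2}^{>0}(n)$ are exactly the sets of first, last, and ``recursion-free'' vertices characterising where the gluings happen (cf.\ property~(ii) for $n-1$), so the conditions in lines~\ref{line:xp00f}, \ref{line:xp01f1}, \ref{line:xp01f2}, \ref{line:xp01f3}, \ref{line:xp11f1}, \ref{line:xp11f2} pick out precisely the endpoints at which a gluing edge leaves the current suffix block and enters another. Whenever none of these endpoint conditions trigger, the algorithm recurses to find a neighbor inside the current suffix block (lines~\ref{line:recurse-10}, \ref{line:recurse-00}, \ref{line:recurse-01}), and the inductive hypothesis for $n-1$ again yields the correct neighbor. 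For $x^+=(1,1)$ (line~\ref{line:recurse-11}) the construction of \cite{muetze:14} relates this block via the self-inverse, commuting involutions $\ol{\rev}$ and $\pi$ to the $(1,0)$-block and additionally reverses the traversal direction; this is reflected in the algorithm by conjugating the recursive call with $\ol{\rev}\bullet\pi$ and by passing $\ol{\dir}$ instead of $\dir$.

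The main obstacle is the bookkeeping required to match the case analysis of the algorithm with that of the construction in \cite{muetze:14} one-to-one: checking that for each of the six ``endpoint'' conditions (lines~\ref{line:xp00f}, \ref{line:xp01f1}--\ref{line:xp01f3}, \ref{line:xp11f1}--\ref{line:xp11f2}) the returned vertex is exactly the other end of the corresponding gluing edge, that the direction parameter $\dir$ is threaded consistently under the $\ol{\rev}\bullet\pi$ conjugation in line~\ref{line:recurse-11}, and that in every remaining case the vertex $x^-$ is a non-endpoint in the relevant lower-dimensional set so that the inductive hypothesis is legitimately applicable. Once this case-by-case matching is done, property~(i) (disjointness and coverage) and property~(ii) (endpoint sets $D_{2n}^{=0}(n)$ and $D_{2n}^{-}(n)$) follow from the corresponding statements in \cite{muetze:14} for $\cP_{2n}(n,n+1)$.
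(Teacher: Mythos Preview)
Your plan is correct and follows essentially the same inductive approach as the paper: induction on $n$, the hard-coded base case $n=1$, the case $k\geq n+1$ reducing directly via \cite[Eq.~(8)]{muetze:14} to line~\ref{line:recurse-upper-layers}, and the case $k=n$ handled by matching the four suffix cases of the algorithm against the three-part decomposition \cite[Eq.~(11)]{muetze:14}. The paper carries out the bookkeeping you identify as the main obstacle by making the structure of each path in $\cP_{2n+2}^*$ explicit as a five-segment edge sequence $(e_1,E_2,e_3,E_4,e_5)$ and verifying each segment for both $\nextv$ and $\prevv$ separately (rather than invoking an orientation symmetry, which is not quite a valid shortcut here since the algorithm treats the two directions asymmetrically in lines~\ref{line:xp01f2}--\ref{line:xp01f3} and \ref{line:xp11f1}--\ref{line:xp11f2}), but the overall strategy coincides with yours.
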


In the following we will repeatedly use that by \cite[Lemma~11]{MR3483129} the sets of first, second and last vertices of the paths in $\cP_{2n}(k,k+1)$ satisfy the relations
\begin{subequations}
\label{eq:FSLP}
\begin{align}
  F(\cP_{2n}(k,k+1)) &= D_{2n}^{=0}(k) \enspace, \label{eq:FP} \\
  S(\cP_{2n}(k,k+1)) &= D_{2n}^{>0}(k+1) \enspace, \label{eq:SP} \\
  L(\cP_{2n}(k,k+1)) &= D_{2n}^-(k) \enspace. \label{eq:LP}
\end{align}
\end{subequations}

\begin{proof}
Note that when calling the algorithm $\Paths()$ with the parameter $\flip=\false$, this parameter retains its value in all recursive calls, so we can ignore lines~\ref{line:n2}--\ref{line:base-case2} of the algorithm for this proof.
We prove the lemma by induction on $n$.
To settle the base case $n=1$ observe that the set of paths $\cP_2(1,2)$ defined in Section~\ref{sec:recap-paths} and containing only a single path is the same as the one defined in \eqref{eq:P212} and returned in lines~\ref{line:n1}--\ref{line:base-case1}.

For the induction step $n\rightarrow n+1$ we assume that the lemma holds for $n\geq 1$ and all $k=n,n+1,\ldots,2n-1$, and prove it for $n+1$ and all $k=n+1,n+2,\ldots,2n+1$.
We distinguish the cases $k=n+1$ and $k\in\{n+2,\ldots,2n+1\}$.

For the case $k\in\{n+2,\ldots,2n+1\}$ let $\cP_{2n+2}(k,k+1)$ be the set of paths defined in \eqref{eq:ind-step1-P}, let $P$ be a path from this set and let $u$ and $v$ be two consecutive vertices on $P$.
Note that for both calls $\Paths(n+1,k,u,\nextv,\false)$ and $\Paths(n+1,k,v,\prevv,\false)$ the condition in line~\ref{line:upper-layers-cond} is satisfied, entailing the recursive call in line~\ref{line:recurse-upper-layers}.
The claim therefore follows from the definition \eqref{eq:ind-step1-P} and by induction: The quantity $k-u_{2n}-u_{2n+1}$ computed in line~\ref{line:recurse-upper-layers} evaluates to $k$, $k-1$ or $k-2$, depending on the values of the last two bits of $u$ and an analogous statement holds for $v$.

For the case $k=n+1$ let $\cP_{2n+2}(n+1,n+2)$ be the set of paths defined in \eqref{eq:ind-step2-P}, let $P$ be a path from this set and let $u$ and $v$ be two consecutive vertices on $P$.
We distinguish three cases depending on which of the three sets on the right hand side of \eqref{eq:ind-step2-P} the path $P$ is contained in.

Case 1: If $P$ is contained in the second set on the right hand side of \eqref{eq:ind-step2-P}, i.e., in the set $\cP_{2n}(n,n+1)\circ 10$, then the last two bits of $u$ and $v$ are equal to $10$, so for both calls $\Paths(n+1,k,u,\nextv,\false)$ and $\Paths(n+1,k,v,\prevv,\false)$ the condition in line~\ref{line:xp10} is satisfied, entailing the recursive call in line~\ref{line:recurse-10}.
In this case the claim follows by induction.

Case 2: If $P$ is contained in the first set on the right hand side of \eqref{eq:ind-step2-P}, i.e., in the set $\cP_{2n}(n+1,n+2)\circ 00$, then the last two bits of $u$ and $v$ are equal to $00$, so for both calls $\Paths(n+1,k,u,\nextv,\false)$ and $\Paths(n+1,k,v,\prevv,\false)$ the condition in line~\ref{line:xp00} is satisfied. 
However, neither $u$ nor $v$ satisfies the condition in line~\ref{line:xp00f}, as by \eqref{eq:new-paths-F} and \eqref{eq:SP}, the vertices in $D_{2n}^{>0}(n+1)\circ 00$ are all visited by the paths $\cP_{2n+2}^*$ that form the third set on the right hand side of set of \eqref{eq:ind-step2-P}, so $P$ does not contain any of these vertices. 
Consequently, both cases entail the recursive call in line~\ref{line:recurse-00} and the claim follows by induction.

Case 3: If $P$ is contained in the third set on the right hand side of \eqref{eq:ind-step2-P}, i.e., in the set $\cP_{2n+2}^*$, then by the definitions \eqref{eq:2-factor-alpha} and \eqref{eq:new-paths} and the fact that for $\alpha_{2n}=(1,1,\ldots,1)\in \{0,1\}^{n-1}$ the mapping defined in \eqref{eq:f-alpha} satisfies $f_{\alpha_{2n}}=\ol{\rev}\bullet \pi$, the sequence of edges of the path $P$ when traversing it from its first to its last vertex has the form $(e_1,E_2,e_3,E_4,e_5)$, where $e_1$, $e_3$ and $e_5$ are single edges and $E_2$ and $E_4$ are sequences of edges that satisfy the following conditions:
There are two paths $P',P''\in\cP_{2n}(n,n+1)$ such that
\begin{align*}
  e_1 &= \big(S(P')\circ 00,S(P')\circ 01\big) \enspace, \\
  e_3 &= \big(L(P')\circ 01,L(P')\circ 11\big)=\big(L(P')\circ 01,\ol{\rev}(\pi(L(P'')))\circ 11\big) \enspace, \\
  e_5 &= \big(\ol{\rev}(\pi(F(P'')))\circ 11,\ol{\rev}(\pi(F(P'')))\circ 01\big) \enspace,
\end{align*}
$E_2$ is given by traversing the edges of $P'\circ 01$ starting at the vertex $S(P')\circ 01$ and ending at the vertex $L(P')\circ 01$, and $E_4$ is given by the traversing all the edges of $\ol{\rev}(\pi(P''))\circ 11$ in reverse order starting at the vertex $\ol{\rev}(\pi(L(P'')))\circ 11$ and ending at the vertex $\ol{\rev}(\pi(F(P'')))\circ 11$.
We distinguish five subcases depending on which part of $P$ the edge $(u,v)$ belongs to.

Case~3.1: If $(u,v)=e_1=\big(S(P')\circ 00,S(P')\circ 01\big)$, then the vertices $u$ and $v$ differ in the last bit. 
It follows from \eqref{eq:SP} that for the call $\Paths(n+1,k,u,\nextv,\false)$ the conditions in line~\ref{line:xp00} and \ref{line:xp00f} are both satisfied, so the algorithm correctly returns $v$ which is obtained from $u$ by flipping the last bit.
Similarly, for the call $\Paths(n+1,k,v,\prevv,\false)$ the conditions in line~\ref{line:xp01} and \ref{line:xp01f3} are both satisfied, so the algorithm correctly returns $u$ which is obtained from $v$ by flipping the last bit.

Case~3.2: If $(u,v)$ is an edge from $E_2$, then the last two bits of $u$ and $v$ are equal to $01$, so for both calls $\Paths(n+1,k,u,\nextv,\false)$ and $\Paths(n+1,k,v,\prevv,\false)$ the condition in line~\ref{line:xp01} is satisfied. 
However, the conditions in line~\ref{line:xp01f1}, \ref{line:xp01f2} and \ref{line:xp01f3} are not satisfied:
To see this recall \eqref{eq:FP}, \eqref{eq:SP} and \eqref{eq:LP}, and that by the definition of $E_2$, $u$ and $v$ are different from $F(P')\circ 01$, $u$ is different from $L(P')\circ 01$, and $v$ is different from $S(P')\circ 01$.
Consequently, both cases entail the recursive call in line~\ref{line:recurse-01} and the claim follows by induction.

Case~3.3: If $(u,v)=e_3=\big(L(P')\circ 01,L(P')\circ 11\big)$, then the vertices $u$ and $v$ differ in the second to last bit. 
It follows from \eqref{eq:LP} that for the call $\Paths(n+1,k,u,\nextv,\false)$ the conditions in line~\ref{line:xp01} and \ref{line:xp01f2} are both satisfied, so the algorithm correctly returns $v$ which is obtained from $u$ by flipping the second to last bit.
Similarly, for the call $\Paths(n+1,k,v,\prevv,\false)$ the conditions in line~\ref{line:xp11} and \ref{line:xp11f2} are both satisfied, so the algorithm correctly returns $u$ which is obtained from $v$ by flipping the second to last bit.

Case~3.4: If $(u,v)$ is an edge from $E_4$, then the last two bits of $u$ and $v$ are equal to $11$, so for both calls $\Paths(n+1,k,u,\nextv,\false)$ and $\Paths(n+1,k,v,\prevv,\false)$ the condition in line~\ref{line:xp11} is satisfied. 
However, the conditions in line~\ref{line:xp11f1} and \ref{line:xp11f2} are not satisfied:
To see this recall \eqref{eq:FP} and \eqref{eq:LP} and that the mapping $\ol{\rev}\bullet \pi$ maps each of the sets $F(\cP_{2n}(n,n+1))=D_{2n}^{=0}(n)$ and $L(\cP_{2n}(n,n+1))=D_{2n}^-(n)$ onto itself, and note that by the definition of $E_4$, $u$ is different from $\ol{\rev}(\pi(F(P'')))\circ 11$ and $v$ is different from $\ol{\rev}(\pi(L(P'')))\circ 11$.
Consequently, both cases entail the recursive call in line~\ref{line:recurse-11}. 
The claim follows by induction, noting that inverting the value of the variable $\dir$ accounts for the fact that in $E_4$ the edges of $\ol{\rev}(\pi(P''))\circ 11$ are traversed in reverse order.

Case~3.5: If $(u,v)=e_5=\big(\ol{\rev}(\pi(F(P'')))\circ 11,\ol{\rev}(\pi(F(P'')))\circ 01\big)$, then the vertices $u$ and $v$ differ in the second to last bit. 
It follows from \eqref{eq:FP} that $\ol{\rev}(\pi(F(P'')))\in D_{2n}^{=0}(n)$, so for the call $\Paths(n+1,k,u,\nextv,\false)$ the conditions in line~\ref{line:xp11} and \ref{line:xp11f1} are both satisfied and the algorithm correctly returns $v$ which is obtained from $u$ by flipping the second to last bit.
Similarly, for the call $\Paths(n+1,k,v,\prevv,\false)$ the conditions in line~\ref{line:xp01} and \ref{line:xp01f1} are both satisfied, so the algorithm correctly returns $u$ which is obtained from $v$ by flipping the second to last bit.

This completes the proof.
\end{proof}

\begin{lemma}
\label{lemma:paths-flipped}
Let $n\geq 2$ and $k\in\{n,n+1,\ldots,2n-1\}$ be fixed, let $\cP_{2n}(k,k+1)$ be the set of paths defined in Section~\ref{sec:recap-paths} and $\cX_{2n}(k,k+1)$ the set of flippable pairs for $\cP_{2n}(k,k+1)$ and $\tcP_{2n}(k,k+1)$ the corresponding set of flipped paths in $Q_{2n}(k,k+1)$ defined in Section~\ref{sec:recap-flippable} for the parameter sequence $\alpha_{2i}=(1,1,\ldots,1)\in \{0,1\}^{i-1}$, $i=1,2,\ldots,n-1$.
For any path $P\in\tcP_{2n}(k,k+1)$ and any two consecutive vertices $u$ and $v$ on $P$ (i.e., $P$ has the form $P=(\ldots,u,v,\ldots)$) we have $v=\Paths(n,k,u,\nextv,\true)$ and $u=\Paths(n,k,v,\prevv,\true)$.
\end{lemma}

From \eqref{eq:FSLP} and from the definition of flippable/flipped pairs of paths we obtain that
\begin{subequations}
\begin{align}
  F(\tcP_{2n}(k,k+1)) &= D_{2n}^{=0}(k) \enspace, \label{eq:FtP} \\
  L(\tcP_{2n}(k,k+1)) &= D_{2n}^-(k) \enspace. \label{eq:LtP}
\end{align}
\end{subequations}
\TM{Equality for second vertices does ***not*** hold.}

\begin{proof}
We prove the lemma by induction on $n$, in an analogous way as the proof of Lemma~\ref{lemma:paths-original}.
One crucial difference, however, is the following: When calling the algorithm $\Paths()$ with the parameter $\flip=\true$, this parameter retains its value in all recursive calls except in line~\ref{line:recurse-01}, where it is set to $\false$.

To settle the base case $n=2$ observe that the set of paths $\tcP_4(2,3)$ defined in Section~\ref{sec:recap-flippable} (consisting of two paths on three and seven vertices, respectively) is the same as the one defined in \eqref{eq:tP423} and returned in lines~\ref{line:n2}--\ref{line:base-case2}. 
The set of paths $\tcP_4(3,4)$ defined in Section~\ref{sec:recap-flippable} is empty, so the claim is trivially true.

The induction step $n\rightarrow n+1$ proceeds in an analogous way as in the proof of Lemma~\ref{lemma:paths-original}.
We only give the details for case~3, which is proved in a slightly different way:

Case~3': If $P$ is contained in the third set on the right hand side of \eqref{eq:ind-step2-flipped}, i.e., in the set $\tcP_{2n+2}^*$, then by the definition given after \eqref{eq:ind-step2-flipped} and the fact that for $\alpha_{2n}=(1,1,\ldots,1)\in \{0,1\}^{n-1}$ the mapping defined in \eqref{eq:f-alpha} satisfies $f_{\alpha_{2n}}=\ol{\rev}\bullet \pi$, the sequence of edges of the path $P$ when traversing it from its first to its last vertex has the form $(e_1,E_2,e_3,E_4,e_5)$, where $e_1$, $e_3$ and $e_5$ are single edges and $E_2$ and $E_4$ are sequences of edges that satisfy the following conditions:
There is a path $P'\in\cP_{2n}(n,n+1)$ and a path $P''\in\tcP_{2n}(n,n+1)$ (this is the crucial difference to before, where $P'$ and $P''$ where both from the set $\cP_{2n}(n,n+1)$) such that
\begin{align*}
  e_1 &= \big(S(P')\circ 00,S(P')\circ 01\big) \enspace, \\
  e_3 &= \big(L(P')\circ 01,L(P')\circ 11\big)=\big(L(P')\circ 01,\ol{\rev}(\pi(L(P'')))\circ 11\big) \enspace, \\
  e_5 &= \big(\ol{\rev}(\pi(F(P'')))\circ 11,\ol{\rev}(\pi(F(P'')))\circ 01\big) \enspace,
\end{align*}
$E_2$ is given by traversing the edges of $P'\circ 01$ starting at the vertex $S(P')\circ 01$ and ending at the vertex $L(P')\circ 01$, and $E_4$ is given by the traversing all the edges of $\ol{\rev}(\pi(P''))\circ 11$ in reverse order starting at the vertex $\ol{\rev}(\pi(L(P'')))\circ 11$ and ending at the vertex $\ol{\rev}(\pi(F(P'')))\circ 11$. 

As in the proof of Lemma~\ref{lemma:paths-original}, we distinguish five cases depending on which part of $P$ the edge $(u,v)$ belongs to. 
The last two cases are treated analogously to cases~3.4 and 3.5 in the proof of Lemma~\ref{lemma:paths-original}, using \eqref{eq:FtP} and \eqref{eq:LtP} instead of \eqref{eq:FP} and \eqref{eq:LP}.
The first three cases are exactly the same (not just analogous) as cases~3.1, 3.2 and 3.3. In particular, in case~3.2 where $(u,v)$ is an edge from $E_2$ note that the edges in $E_2$ originate from a path $P'\in\cP_{2n}(n,n+1)$, which is accounted for by setting the value of the variable $\flip$ to $\false$ in the recursive call in line~\ref{line:recurse-01} of the algorithm $\Paths()$.
\end{proof}

Given Lemma~\ref{lemma:paths-original} and Lemma~\ref{lemma:paths-flipped}, we may from now on use the notations $\cP_{2n}(k,k+1)$ and $\tcP_{2n}(k,k+1)$ interchangeably for the sets of paths defined in Sections~\ref{sec:recap-paths} and \ref{sec:recap-flippable} and for the sets of paths computed by the algorithm $\Paths()$ called with parameter $\flip=\false$ or $\flip=\true$, respectively.

\subsection{Correctness of the algorithm \texorpdfstring{$\HamCycle()$}{HamCycle()}}
\label{sec:correctness-hamcycle}

\subsubsection{$\HamCycle()$ computes a 2-factor}

By combining Lemma~\ref{lemma:paths-original}, property~(a) from Section~\ref{sec:recap-paths} and the relations \eqref{eq:FP} and \eqref{eq:LP}, we obtain that the paths $\cP_{2n}(n,n+1)$ computed by the algorithm $\Paths()$ indeed have properties~(i) and (ii) claimed in Section~\ref{sec:paths}. 
Consequently, assuming for a moment that the auxiliary functions $\IsFlipTree_1()$ and $\IsFlipTree_2()$ always return $\false$ (i.e., $\IsFlipVertex()$ always returns $\false$), the arguments given in Sections~\ref{sec:hamcyclenext} and \ref{sec:hamcycle} show that the algorithm $\HamCycle()$ correctly computes the 2-factor $\cC_{2n+1}$ defined in \eqref{eq:2-factor} in the middle levels graph $Q_{2n+1}(n,n+1)$.
We proceed to show that the algorithm $\HamCycle()$ computes a different 2-factor (but still a 2-factor) for each possible choice of boolean functions $\IsFlipTree_1()$ and $\IsFlipTree_2()$ that are called from within $\IsFlipVertex()$.
Later we prove that the functions $\IsFlipTree_1()$ and $\IsFlipTree_2()$ specified in Section~\ref{sec:isflipvertex} yield a 2-factor that consists only of a single cycle, i.e., a Hamilton cycle.
As already mentioned in Section~\ref{sec:hamcyclenext}, the different 2-factors are obtained by replacing some flippable pairs of paths from $\cP_{2n}(n,n+1)$ in the first set on the right hand side of \eqref{eq:2-factor} by the corresponding flipped pairs of paths from $\tcP_{2n}(n,n+1)$.
There are two potential problems that the function $\IsFlipVertex()$ as it is called from $\HamCycleNext()$ and $\HamCycle()$ (see line~\ref{line:check-flip} and line~\ref{line:init-flip-end}, respectively) could cause in this approach, and the next lemma shows that none of these problems occurs.
First, the function might return $\true$ for a vertex $x\in D_{2n}^{=0}(n)$ for which the path $P\in\cP_{2n}(n,n+1)$ that starts with this vertex (i.e., $F(P)=x$) is not contained in a flippable pair, so the subsequent calls to $\Paths()$ with parameter $\flip=\true$ would produce undefined output.
Second, given a flippable pair of paths $P,P'\in\cP_{2n}(n,n+1)$, the results of the calls $\IsFlipVertex(n,x)$ and $\IsFlipVertex(n,x')$ with $x:=F(P)$ and $x':=F(P')$ might be different/inconsistent, so our algorithm would not compute a valid 2-factor.

\begin{lemma}
\label{lemma:flip-possible}
Let $n\geq 1$, and let the sets $\cP_{2n}(n,n+1)$ and $\cX_{2n}(n,n+1)$ be as in Lemma~\ref{lemma:paths-flipped}. 
Furthermore, let $\IsFlipTree_1()$ and $\IsFlipTree_2()$ be arbitrary boolean functions on the sets of ordered rooted trees $\cT_{n,1}^*$ and $\cT_{n,2}^*$ defined in Section~\ref{sec:isflipvertex}, respectively, and let $\IsFlipVertex()$ be as defined in Algorithm~\ref{alg:isflipvertex}.
For any lattice path $x\in D_{2n}^{=0}(n)$ with $\IsFlipVertex(n,x)=\true$ there is another lattice path $x'\in D_{2n}^{=0}(n)$ with $\IsFlipVertex(n,x')=\true$ and two paths $P,P'\in\cP_{2n}(n,n+1)$ with $(F(P),F(P'))=(x,x')$ that form a flippable pair $(P,P')\in\cX_{2n}(n,n+1)$.
\end{lemma}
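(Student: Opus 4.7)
The plan is to perform a case analysis on which branch of Algorithm~\ref{alg:isflipvertex} returned $\true$, and in each case exhibit the partner $x'$ as the lattice path obtained from applying $\tau_1^{\pm 1}$ or $\tau_2^{\pm 1}$ at the tree level followed by $h$. Concretely, if $T:=h^{-1}(x)$ lies in $\cT_{n,1}^*$, set $x':=h(\tau_1(T))$; if $T\in\tau_1(\cT_{n,1}^*)$, set $x':=h(\tau_1^{-1}(T))$; and analogously for the two $\tau_2$ branches. In each case $\IsFlipVertex(n,x')=\true$ follows immediately from the fact that the four sets $\cT_{n,1}^*,\tau_1(\cT_{n,1}^*),\cT_{n,2}^*,\tau_2(\cT_{n,2}^*)$ are pairwise disjoint (as noted in Section~\ref{sec:isflipvertex}) and from the observation that $\tau_1$ and $\tau_2$ are bijections onto their images, so that the value of $\IsFlipTree_1(T)$ (resp.\ $\IsFlipTree_2(T)$) is the same regardless of whether we enter the algorithm through the preimage or image branch. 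In particular $x\neq x'$ because $T\neq\tau_i(T)$ for $i\in\{1,2\}$.

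The substantive content is then to exhibit the flippable pair $(P,P')\in\cX_{2n}(n,n+1)$ with $F(P)=x$ and $F(P')=x'$. The plan is to appeal to the description of $\cX_{2n}(n,n+1)$ given in~\cite[Section~3.2]{muetze:14}: that construction partitions a certain subset of the paths in $\cP_{2n}(n,n+1)$ into flippable pairs, and the mapping $h$ from Section~\ref{sec:isflipvertex} is precisely the bijection that translates the combinatorial criterion for membership in such a pair into the clean tree-theoretic descriptions via $\tau_1$ and $\tau_2$. More precisely, one checks that via $h$, the two trees $T$ and $\tau_1(T)$ (for $T\in\cT_{n,1}^*$) correspond to the first vertices of a flippable pair of paths of one type in $\cX_{2n}(n,n+1)$ (the ``base case'' flippable pairs of $\tcP_4(2,3)$ propagated upward through the recursive construction), and similarly $T$ and $\tau_2(T)$ (for $T\in\cT_{n,2}^*$) correspond to the first vertices of a flippable pair of paths of the other type (arising from the $\cP_{2n+2}^*$ portion of \cite[Eq.~(11)]{muetze:14}, whose flippable pairs are described by \cite[Eq.~(15)]{muetze:14} and the subsequent modification).

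Carrying this out requires two ingredients. First, using the recursive decomposition \eqref{eq:def-h} together with \eqref{eq:FP}, one verifies that a lattice path $x\in D_{2n}^{=0}(n)$ is the first vertex of a path in $\cP_{2n}(n,n+1)$ belonging to a flippable pair of the first type precisely when $h^{-1}(x)\in\cT_{n,1}^*\cup\tau_1(\cT_{n,1}^*)$, and of the second type precisely when $h^{-1}(x)\in\cT_{n,2}^*\cup\tau_2(\cT_{n,2}^*)$; this is an induction on $n$ that unfolds the recursive construction of $\cX_{2n}(n,n+1)$ in \cite[Section~3.2]{muetze:14} in parallel with \eqref{eq:def-h}. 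Second, under this correspondence the partner of $x$ in the flippable pair is exactly the path with root tree $\tau_1(T)$ (respectively $\tau_2(T)$), which is our chosen $x'$.

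The main obstacle will be the first ingredient: carefully matching the inductive case distinctions of the flippable pair construction in \cite[Section~3.2]{muetze:14} to the recursive partition $x=(1)\circ x_\ell\circ(0)\circ x_r$ used in the definition of $h$, and tracking the twist introduced by $\pi$ in \eqref{eq:def-h-indstep}. This is precisely where the $\ol{\rev}\bullet\pi$ appearing in Case~3 of the proof of Lemma~\ref{lemma:paths-original} resurfaces, so the bookkeeping is analogous to that proof but must be done for flippable pairs rather than for individual paths. Once the tree-level characterizations of the two types of flippable pairs are in place, the statement of the lemma follows immediately by choosing $x'$ as above in each of the four cases.
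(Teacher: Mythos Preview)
Your case analysis and partner construction are exactly what the paper does: set $T:=h^{-1}(x)$, branch on whether $T$ lies in $\cT_{n,1}^*$, $\tau_1(\cT_{n,1}^*)$, $\cT_{n,2}^*$, or $\tau_2(\cT_{n,2}^*)$, define $x'$ as $h$ applied to $\tau_i^{\pm 1}(T)$, and use disjointness of the four sets to conclude $\IsFlipVertex(n,x')=\true$.

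Where you diverge is in the second half. You propose to establish the existence of the flippable pair by running an induction on $n$ that unfolds the construction of $\cX_{2n}(n,n+1)$ in parallel with the recursion \eqref{eq:def-h} for $h$, tracking the $\pi$-twist and matching tree-level conditions to the two types of flippable pairs. That would work, but it is a substantial re-derivation. The paper bypasses it entirely by citing ready-made results from \cite{muetze:14}: the bitstring descriptions of $\cT_{n,i}^*$ and $\tau_i$ match the preconditions of \cite[Lemmas~21 and~22]{muetze:14}, which give exactly that $(h(T),h(\tau_i(T)))$ satisfy \cite[Eqs.~(64) resp.~(66)]{muetze:14}; hence $(x,x')$ lies in the set $H_1$ resp.\ $H_2$ of \cite[Eqs.~(86),(87)]{muetze:14}; and then \cite[Lemma~24]{muetze:14} directly supplies the flippable pair $(P,P')\in\cX_{2n}(n,n+1)$ with $(F(P),F(P'))=(x,x')$. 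So the ``main obstacle'' you flag has already been discharged in the companion paper, and the present proof is just four short citations per case. A minor point: you only need one direction (if $h^{-1}(x)$ lies in one of the four sets then there \emph{is} a flippable pair), not the biconditional ``precisely when'' that you state.
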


\begin{proof}
By the definitions in lines~\ref{line:h-inverse}--\ref{line:tau_2-image}, the tree $T:=h^{-1}(x)$ is contained in exactly one of the four sets $\cT_{n,1}^*$, $\tau_1(\cT_{n,1}^*)$, $\cT_{n,2}^*$, $\tau_2(\cT_{n,2}^*)$.
All these sets of trees are disjoint. 
We consider four cases depending on which of the sets the tree $T$ is contained in.

Case~1: If $T\in\cT_{n,1}^*$, then by the instructions in line~\ref{line:tau_1-preimage} we have $\IsFlipTree_1(T)=\true$. 
Now consider the tree $T':=\tau_1(T)$ and the lattice path $x':=h(T')=h(\tau_1(T))\in D_{2n}^{=0}(n)$. 
Clearly, the instructions in line~\ref{line:tau_1-image} ensure that $\IsFlipVertex(n,x')=\true$.
By the definition of $\cT_{n,1}^*$ and $\tau_1$, the lattice paths corresponding to the trees $T$ and $T'$ satisfy the preconditions of \cite[Lemma~21]{MR3483129}, implying that the pair of lattice paths $(h(T),h(T'))=(x,x')$ is contained in the set $H_1$ defined in \cite[Lemma~24]{MR3483129}, so by this lemma there are two paths $P,P'\in\cP_{2n}(n,n+1)$ with $(F(P),F(P'))=(x,x')$ that form a flippable pair $(P,P')\in\cX_{2n}(n,n+1)$.

Case~2: If $T\in\tau_1(\cT_{n,1}^*)$, then by the instructions in line~\ref{line:tau_1-image} we have $\IsFlipTree_1(\tau_1^{-1}(T))=\true$. 
Now consider the tree $T':=\tau_1^{-1}(T)$ and the lattice path $x':=h(T')=h(\tau_1^{-1}(T))\in D_{2n}^{=0}(n)$. 
Clearly, the instructions in line~\ref{line:tau_1-preimage} ensure that $\IsFlipVertex(n,x')=\true$.
From here the proof continues as in case~1, however, the roles of $T$ and $T'$ are interchanged.

The cases $T\in\cT_{n,2}^*$ and $T\in\tau_2(\cT_{n,2}^*)$ are dealt with analogously to case~1 and case~2, respectively, replacing the application of \cite[Lemma~21]{MR3483129} by \cite[Lemma~22]{MR3483129} and the set $H_1$ by $H_2$ defined in \cite[Lemma~24]{MR3483129}.
\end{proof}

The following simple but powerful lemma follows immediately from the definition of flippable/flipped pairs of paths given in Section~\ref{sec:paths}.

\begin{lemma}
\label{lemma:flip-paths}
Let $G$ be a graph, $\cC$ a 2-factor in $G$, $C$ and $C'$ two cycles in $\cC$, $(P,P')$ a flippable pair of paths such that $P$ is contained in $C$ and $P'$ is contained in $C'$, and $(R,R')$ a corresponding flipped pair of paths. 
Replacing the paths $P$ and $P'$ in the 2-factor $\cC$ by $R$ and $R'$ yields a 2-factor in which the two cycles $C$ and $C'$ are joined to a single cycle.
\end{lemma}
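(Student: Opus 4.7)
The plan is to prove this by directly unwinding the definition of flippable and flipped pairs given in Section~\ref{sec:paths}. The key identities to invoke are $V(R)\cup V(R') = V(P)\cup V(P')$ together with the four endpoint equalities $F(R)=F(P)$, $F(R')=F(P')$, $L(R)=L(P')$, and $L(R')=L(P)$. From these identities alone, replacing the edges of $P,P'$ in $\cC$ by the edges of $R,R'$ preserves degree $2$ at every vertex of $G$ (the new and old paths share the same endpoints and cover the same vertex set), so the resulting set of edges is again a 2-factor. The only nontrivial point is therefore to check that the two cycles $C$ and $C'$ merge into a single cycle.

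For that, I would decompose $C = P \circ Q$ and $C' = P' \circ Q'$, where $Q$ is the subpath of $C$ running from $L(P)$ back to $F(P)$ through the complement of $P$, and $Q'$ is defined analogously for $C'$. Then I would trace the new cycle starting from $F(P) = F(R)$: follow $R$ to $L(R)=L(P')$, continue along $Q'$ to $F(P')=F(R')$, follow $R'$ to $L(R')=L(P)$, and finally traverse $Q$ back to $F(P)$. The endpoint identities make every transition well-defined, and the vertex-set identity guarantees that this closed walk visits exactly $V(C)\cup V(C')$, each vertex once. All other cycles of $\cC$ are unaffected, since they share no edges with $P$ or $P'$, so the result is indeed a 2-factor with $C$ and $C'$ replaced by a single joined cycle.

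The main obstacle is purely notational: one must be careful about orientations to see that the transitions $R \to Q'$ and $R' \to Q$ really do meet up at the correct endpoints, i.e.\ that the flipped pair crosses the endpoints of $P$ and $P'$ (rather than preserving them). Drawing the standard picture of two cycles sharing a ``crossing'' of the flipped paths makes the merger immediate, so beyond setting up the decomposition $C=P\circ Q$, $C'=P'\circ Q'$, there is no combinatorial content to verify.
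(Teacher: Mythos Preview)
Your proposal is correct and matches the paper's approach: the paper simply states that the lemma ``follows immediately from the definition of flippable/flipped pairs of paths given in Section~\ref{sec:paths}'' without further detail, and your argument is precisely the unwinding of that definition via the decomposition $C=P\circ Q$, $C'=P'\circ Q'$ and the endpoint identities.
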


The Hamilton cycle computed by our algorithm $\HamCycle()$ is obtained from the 2-factor $\cC_{2n+1}$ defined in \eqref{eq:2-factor} by repeatedly applying the flipping operation from Lemma~\ref{lemma:flip-paths}. 
Specifically, we replace several flippable pairs of paths $P,P'\in\cP_{2n}(n,n+1)$ by the corresponding flipped paths $R,R'\in\tcP_{2n}(n,n+1)$ in the first set on the right hand side of \eqref{eq:2-factor} such that all cycles of the 2-factor $\cC_{2n+1}$ are joined to a single cycle.
This of course requires that all flippable pairs of paths are disjoint, which is guaranteed by the definitions of $\cP_{2n}(n,n+1)$ and $\cX_{2n}(n,n+1)$: The paths in $\cP_{2n}(n,n+1)$ are all disjoint, and each path from this set appears in at most one flippable pair in the set $\cX_{2n}(n,n+1)$.
To reasonably choose which flippable pairs of paths to replace by the corresponding flipped paths, we need to understand the cycle structure of the 2-factor $\cC_{2n+1}$.

\subsubsection{Structure of cycles in the 2-factor $\cC_{2n+1}$}

To describe the structure of the cycles in the 2-factor $\cC_{2n+1}$ defined in \eqref{eq:2-factor}, we introduce the concept of plane trees.

\textit{Plane trees.}
A \emph{plane tree} is a tree with a cyclic ordering of all neighbors of each vertex. 
We think of a plane tree as a tree embedded in the plane such that for each vertex $v$ the order in which its neighbors are encountered when walking around $v$ in counterclockwise direction is precisely the specified cyclic ordering, see the middle of Figure~\ref{fig:p-trees}.
We denote by $\cT_n$ the set of all plane trees with $n$ edges.
The notions of thin/thick leaves and of clockwise/counterclockwise-next leaves introduced in Section~\ref{sec:isflipvertex} for ordered rooted trees can be defined for plane trees in a completely analogous fashion.

\textit{Transformations $\plane()$ and $\troot()$ between ordered rooted trees and plane trees.}
For any ordered rooted tree $T^*\in\cT_n^*$, we define a plane tree $T=\plane(T^*)\in\cT_n$ as follows: The underlying (abstract) tree of $T^*$ and $T$ is the same. 
For the root $r$ of $T^*$, the cyclic ordering of neighbors of $r$ in $T$ is given by the left-to-right ordering of the children of $r$ in $T^*$. 
For any other vertex $v$ of $T^*$, if $(u_1,u_2,\ldots,u_k)$ is the left-to-right ordering of the children of $v$ in $T^*$ and $w$ is the parent of $v$, then we define $(w,u_1,u_2,\ldots,u_k)$ as the cyclic ordering of neighbors of $v$ in $T$.

For any plane tree $T\in\cT_n$ and any edge $(r,u)$ of $T$, we define an ordered rooted tree $T^*=\troot(T,(r,u))\in\cT_n^*$ as follows: The underlying (abstract) tree of $T$ and $T^*$ is the same.
The vertex $r$ is the root of $T^*$, and if $(u_1,u_2,\ldots,u_k)$ with $u_1=u$ is the cyclic ordering of neighbors of $r$ in $T$, then $(u_1,u_2,\ldots,u_k)$ is the left-to-right ordering of the children of $r$ in $T^*$. 
For any other vertex $v$ of $T$, if $w$ is the vertex on the path from $v$ to $r$ and $(u_0,u_1,\ldots,u_k)$ with $u_0=w$ is the cyclic ordering of neighbors of $v$ in $T$, then $(u_1,u_2,\ldots,u_k)$ is the left-to-right ordering of the children of $v$ in $T^*$.

Informally speaking, $\plane(T^*)$ is obtained from $T^*$ by `forgetting' the root vertex, and $\troot(T,(r,u))$ is obtained from $T$ by `pulling out' the vertex $r$ as root such that $u$ is the leftmost child of the root, see Figure~\ref{fig:p-trees}.

\begin{figure}
\centering
\PSforPDF{
 \psfrag{t1s}{$T_1^*$}
 \psfrag{t2s}{$T_2^*$}
 \psfrag{tp}{$T$}
 \psfrag{u1}{$u_1$}
 \psfrag{u2}{$u_2$}
 \psfrag{r1}{$r_1$}
 \psfrag{r2}{$r_2$}
 \psfrag{plane1}{$\plane(T_1^*)$}
 \psfrag{plane2}{$\plane(T_2^*)$}
 \psfrag{root1}{$\troot(T,(r_1,u_1))$}
 \psfrag{root2}{$\troot(T,(r_2,u_2))$}
 \includegraphics{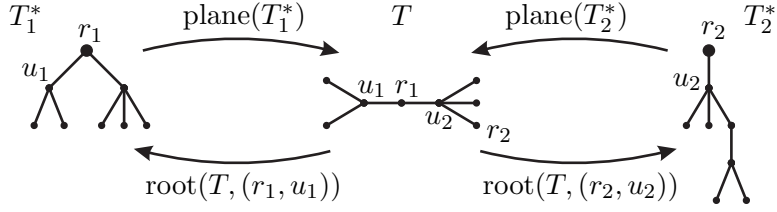}
}
\caption{Two ordered rooted trees $T_1^*,T_2^*\in\cT_7^*$ (left and right), a plane tree $T\in\cT_7$ (middle), and the transformations between them.}
\label{fig:p-trees}
\end{figure}

For any cycle $C$ of the 2-factor $\cC_{2n+1}$ defined in \eqref{eq:2-factor} we let $F(C)$ denote the cyclic sequence of all vertices of the form $F(P)$, $P\in\cP_{2n}(n,n+1)$, for which the path $P\circ 0$ is contained in $C$ (when walking along $C$ in the direction such that $P\circ 0$ is traversed starting at the first vertex and ending at the last vertex).
It turns out to be very useful to consider the ordered rooted trees corresponding to the elements of the sequence $F(C)$ (recall \eqref{eq:FP} and the bijection between lattice paths in $D_{2n}^{=0}(n)$ and ordered rooted trees in $\cT_n^*$).
The following lemma is a slight reformulation of \cite[Lemma~20]{MR3483129}.
For the reader's convenience, the lemma is illustrated in Figure~\ref{fig:rot-trees}.

\begin{lemma}
\label{lemma:2f-C1}
For any $n\geq 1$ and any cycle $C$ of the 2-factor $\cC_{2n+1}$ defined in \eqref{eq:2-factor}, considering the cyclic sequence of ordered rooted trees $(T_1,T_2,\ldots,T_k):=F(C)$ we have $h^{-1}(T_{i+1})=\trot(h^{-1}(T_i))$ for all $i=1,2,\ldots,k$ (indices are considered modulo $k$) with $h^{-1}$ and $\trot$ as defined in Section~\ref{sec:isflipvertex}.
I.e., we can associate the cycle $C$ with the plane tree $\plane(h^{-1}(T_1))=\cdots=\plane(h^{-1}(T_k))\in\cT_n$. 
Moreover, for any plane tree $T\in\cT_n$ there is a cycle $C\in\cC_{2n+1}$ to which the tree $T$ is associated. 
In particular, the number of cycles of the 2-factor is $|\cC_{2n+1}|=|\cT_n|$.
\end{lemma}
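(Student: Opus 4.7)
The plan is to reduce the statement directly to \cite[Lemma~20]{muetze:14}, which provides exactly this rotation-orbit description for the 2-factor $\cC_{2n+1}^1$ studied there. My entire task thus becomes (a) verifying that $\cC_{2n+1}^1$, for the appropriate choice of parameters, coincides with the 2-factor $\cC_{2n+1}$ defined in \eqref{eq:2-factor}, and (b) translating the conclusion through the bijection $D_{2n}^{=0}(n)\longleftrightarrow\cT_n^*$.

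For step (a), I would first invoke Lemma~\ref{lemma:paths-original} to identify the internal building blocks: the paths $\cP_{2n}(n,n+1)$ computed by $\Paths()$ with $\flip=\false$ are precisely those constructed in \cite[Section~2.2]{muetze:14} for the parameter sequence $\alpha_{2i}=(1,\ldots,1)$, $i=1,\ldots,n-1$. Next, I would inspect the definition of $\cC_{2n+1}^1$ in \cite[Eqs.~(5),(7)]{muetze:14}: the external gluing between the two copies of $Q_{2n}(n,n+1)$ is implemented by a mapping $f_{\alpha_{2n}}$ which, for $\alpha_{2n}=(0,\ldots,0)\seq\{0,1\}^{n-1}$, specializes to $\ol{\rev}$. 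With this choice the second copy is exactly $\ol{\rev}(\cP_{2n}(n,n+1))\circ(1)$ and the connecting edges coincide with $M_{2n+1}'$, so $\cC_{2n+1}=\cC_{2n+1}^1$ as sets of edges.

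For step (b), once the 2-factors are identified, the cited lemma asserts precisely that consecutive entries of the cyclic sequence $F(C)$, viewed as ordered rooted trees via the bijection $D_{2n}^{=0}(n)\to\cT_n^*$ described in Section~\ref{sec:isflipvertex} and then transformed by $h^{-1}$, differ by $\trot$. This is exactly the identity $h^{-1}(T_{i+1})=\trot(h^{-1}(T_i))$. The rotation orbit $\{\trot^j(h^{-1}(T_1))\}_{j\geq 0}$ then projects under $\plane$ to a single plane tree in $\cT_n$ associated with $C$, and this plane tree does not depend on the choice of representative in the cyclic sequence $F(C)$, since $\plane$ is constant on rotation orbits. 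Surjectivity of the association $C\mapsto\plane(h^{-1}(T_1))$ and the cardinality equality $|\cC_{2n+1}|=|\cT_n|$ follow from the corresponding surjectivity and counting statement in the second part of \cite[Lemma~20]{muetze:14}.

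The main obstacle is not combinatorial but notational: one has to carefully disentangle the parameter sequence used in \cite{muetze:14} for the inductive construction of $\cP_{2n}(n,n+1)$ (which is $\alpha_{2i}=(1,\ldots,1)$ for $i<n$) from the top-level parameter $\alpha_{2n}$ controlling the glueing (which we pick as $(0,\ldots,0)$ so that $f_{\alpha_{2n}}=\ol{\rev}$ matches \eqref{eq:2-factor}). This decoupling is precisely what allows the 2-factor constructed above to be recognized as an instance of $\cC_{2n+1}^1$, after which the cited lemma does all the remaining work.
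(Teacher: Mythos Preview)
Your proposal is correct and matches the paper's own justification essentially verbatim: the paper does not give a standalone proof but states in the paragraph preceding the lemma that it is a reformulation of \cite[Lemma~20]{muetze:14}, with the identification $\cC_{2n+1}=\cC_{2n+1}^1$ justified exactly by Lemma~\ref{lemma:paths-original} together with $f_{\alpha_{2n}}=\ol{\rev}$ for $\alpha_{2n}=(0,\ldots,0)$. Your observation about decoupling the inner parameter sequence $\alpha_{2i}=(1,\ldots,1)$ for $i<n$ from the top-level gluing parameter $\alpha_{2n}=(0,\ldots,0)$ is precisely the content of that parenthetical remark.
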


\begin{figure}
\centering
\PSforPDF{
 \psfrag{2f1}{$\cC_{2n+1}=\{C_1,C_2,C_3\}$}
 \psfrag{ch1}{$C_1$}
 \psfrag{ch2}{$C_2$}
 \psfrag{ch3}{$C_3$} 
 \psfrag{rot}{$\trot$}
 \psfrag{h}{\Large $h$}
 \psfrag{t01}{$\That_1$}
 \psfrag{t02}{$\That_2$}
 \psfrag{t03}{$\That_3$}
 \psfrag{t04}{$\That_4$}
 \psfrag{t05}{$\That_5$}
 \psfrag{t06}{$\That_6$}
 \psfrag{t07}{$\That_7$}
 \psfrag{t08}{$\That_8$}
 \psfrag{t09}{$\That_9$}
 \psfrag{t10}{$\That_{10}$}
 \psfrag{tel}{$\That_{11}$}
 \psfrag{t12}{$\That_{12}$}
 \psfrag{t13}{$\That_{13}$}
 \psfrag{t14}{$\That_{14}$}
 \psfrag{th01}{$T_1$}
 \psfrag{th02}{$T_2$}
 \psfrag{th03}{$T_3$}
 \psfrag{th04}{$T_4$}
 \psfrag{th05}{$T_5$}
 \psfrag{th06}{$T_6$}
 \psfrag{th07}{$T_7$}
 \psfrag{th08}{$T_8$}
 \psfrag{th09}{$T_9$}
 \psfrag{th10}{$T_{10}$}
 \psfrag{thel}{$T_{11}$}
 \psfrag{th12}{$T_{12}$}
 \psfrag{th13}{$T_{13}$}
 \psfrag{th14}{$T_{14}$}
 \psfrag{fp01}{\footnotesize $F(P_1)$}
 \psfrag{fp02}{\footnotesize $F(P_2)$}
 \psfrag{fp03}{\footnotesize $F(P_3)$}
 \psfrag{fp04}{\footnotesize $F(P_4)$}
 \psfrag{fp05}{\footnotesize $F(P_5)$}
 \psfrag{fp06}{\footnotesize $F(P_6)$}
 \psfrag{fp07}{\footnotesize $F(P_7)$}
 \psfrag{fp08}{\footnotesize $F(P_8)$}
 \psfrag{fp09}{\footnotesize $F(P_9)$}
 \psfrag{fp10}{\footnotesize $F(P_{10})$}
 \psfrag{fpel}{\footnotesize $F(P_{11})$}
 \psfrag{fp12}{\footnotesize $F(P_{12})$}
 \psfrag{fp13}{\footnotesize $F(P_{13})$}
 \psfrag{fp14}{\footnotesize $F(P_{14})$}
 \psfrag{fph01}{\footnotesize $F(P_1)$}
 \psfrag{fph02}{\footnotesize $F(P_2)$}
 \psfrag{fph03}{\footnotesize $F(P_3)$}
 \psfrag{fph04}{\footnotesize $F(P_4)$}
 \psfrag{fph05}{\footnotesize $F(P_5)$}
 \psfrag{fph06}{\footnotesize $F(P_6)$}
 \psfrag{fph07}{\footnotesize $F(P_7)$}
 \psfrag{fph08}{\footnotesize $F(P_8)$}
 \psfrag{fph09}{\footnotesize $F(P_9)$}
 \psfrag{fph10}{\footnotesize $F(P_{10})$}
 \psfrag{fphel}{\footnotesize $F(P_{11})$}
 \psfrag{fph12}{\footnotesize $F(P_{12})$}
 \psfrag{fph13}{\footnotesize $F(P_{13})$}
 \psfrag{fph14}{\footnotesize $F(P_{14})$}
 \psfrag{pu}[c][c][1][90]{$\cP_{2n}(n,n+1)$}
 \psfrag{pl}[c][c][1][90]{$\ol{\rev}(\cP_{2n}(n,n+1))$}
 \includegraphics{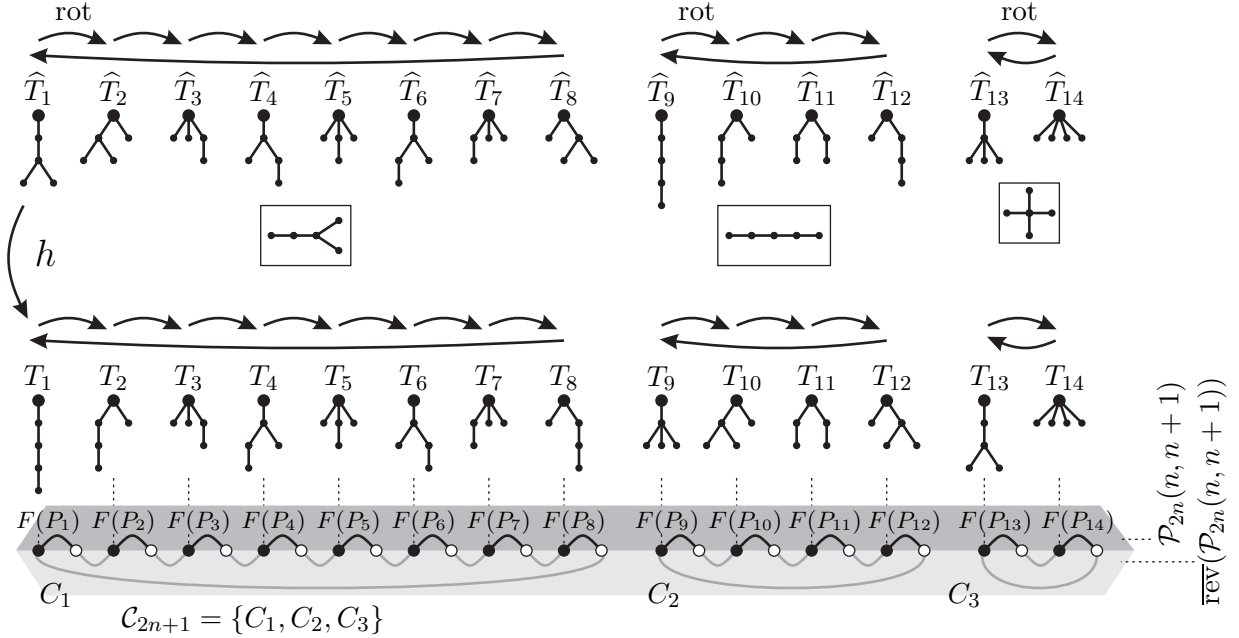}
}
\caption{Structure of the 2-factor $\cC_{2n+1}$ for $n=4$ (cf.\ the bottom part of Figure~\ref{fig:2factor}). 
The figure shows the cyclic sequences $F(C)$ of ordered rooted trees for all cycles $C\in\cC_{2n+1}$ (bottom) and their preimages under the mapping $h$ (top). 
Note that trees rotate along each of the sequences $h^{-1}(F(C))$, $C\in\cC_{2n+1}$.
The boxes show the plane trees associated with each cycle $C\in\cC_{2n+1}$.
The sets of trees $\cT_{n,1}^*$ and $\cT_{n,2}^*$ defined in Section~\ref{sec:isflipvertex} are given by $\cT_{n,1}^*=\{\That_7,\That_{11}\}$ and $\cT_{n,2}^*=\{\That_6\}$ and we have $\tau_1(\That_7)=\That_{14}$, $\tau_1(\That_{11})=\That_3$ and $\tau_2(\That_6)=\That_9$.
}
\label{fig:rot-trees}
\end{figure}

\subsubsection{$\HamCycle()$ computes a Hamilton cycle}

Lemmas~\ref{lemma:flip-possible}, \ref{lemma:flip-paths} and \ref{lemma:2f-C1} motivate the following definitions of two directed multigraphs $\cG_n$ and $\cH_n\seq\cG_n$ that are illustrated in Figure~\ref{fig:gh6}.

\textit{Graphs $\cG_n$ and $\cH_n$.}
The nodes of $\cG_n$ are sets of ordered rooted trees from $\cT_n^*$ such that all trees that differ by the rotation operation $\trot$ are in the same set.
The edges of $\cG_n$ are defined as follows: For $i\in\{1,2\}$ and any pair of trees $(T,T')$ with $T\in\cT_{n,i}^*$ and $T'=\tau_i(T)$ we add an edge directed from the set of trees containing $T$ to the set of trees containing $T'$. 
We refer to the edges of $\cG_n$ induced by the mappings $\tau_1$ and $\tau_2$ as $\tau_1$-edges or $\tau_2$-edges, respectively.
The graph $\cH_n$ has the same nodes as $\cG_n$, but only a subset of its edges. 
Specifically, for $i\in\{1,2\}$ and any pair of trees $(T,T')$ with $T\in\cT_{n,i}^*$ and $T'=\tau_i(T)$ we only add an edge directed from the set of trees containing $T$ to the set of trees containing $T'$ if $\IsFlipTree_i(T)=\true$, where $\IsFlipTree_1()$ and $\IsFlipTree_2()$ are as defined in Section~\ref{sec:isflipvertex}.

As each node of $\cG_n$ and $\cH_n$ is a set $\{T_1,T_2,\ldots,T_k\}$ of ordered rooted trees that differ by rotation, i.e., we have $T_{i+1}=\trot(T_i)$ for $i=1,2,\ldots,k$, we can identify this node with the plane tree $\plane(T_1)=\cdots=\plane(T_k)\in\cT_n$, see the example in Figure~\ref{fig:gh6} for $n=6$.
Also, we can interpret the result of the operations $\tau_1$ and $\tau_2$ that define the edges of $\cG_n$ and $\cH_n$ in terms of the corresponding plane trees.
Note e.g.\ that $\tau_1$ increases the number of leaves by one, and $\tau_2$ decreases the number of leaves by one, implying that neither $\cG_n$ nor $\cH_n$ has any loops.
This effect can be seen in Figure~\ref{fig:gh6}, where all $\tau_1$-edges go from right to left, and all $\tau_2$-edges from left to right.
The graph $\cG_n$ may have multiple edges between nodes, see Figure~\ref{fig:gh6}.
The $\tau_1$- and $\tau_2$-edges added to the graph $G_n$ are also described in Figure~\ref{fig:rot-trees} for the smaller example $n=4$.

By Lemma~\ref{lemma:2f-C1}, the nodes of $\cG_n$ and $\cH_n$ correspond to the cycles of the 2-factor $\cC_{2n+1}$ defined in \eqref{eq:2-factor}. 
Applying Lemma~\ref{lemma:flip-possible} with boolean functions $\IsFlipTree_1()$ and $\IsFlipTree_2()$ that always return $\true$ shows that each edge of $\cG_n$ corresponds to a flippable pair of paths from the set $\cX_{2n}(n,n+1)$ for which the paths are contained in the corresponding cycles.
In other words, the edges of $\cG_n$ capture all potential flipping operations that could be performed to modify the 2-factor $\cC_{2n+1}$.
Applying Lemma~\ref{lemma:flip-possible} with the boolean functions $\IsFlipTree_1()$ and $\IsFlipTree_2()$ defined in Section~\ref{sec:isflipvertex} shows that each edge of $\cH_n$ corresponds to a flippable pair of paths from $\cP_{2n}(n,n+1)$ that are actually replaced by the corresponding flipped paths from $\tcP_{2n}(n,n+1)$ in the first set on the right hand side of \eqref{eq:2-factor} by our Hamilton cycle algorithm $\HamCycle()$. 
By Lemma~\ref{lemma:flip-paths} the edges of $\cH_n$ therefore indicate which pairs of cycles from the 2-factor $\cC_{2n+1}$ the algorithm joins to a single cycle.
Consequently, to complete the correctness proof for the algorithm $\HamCycle()$, it suffices to show the following lemma.

\begin{figure}
\centering
\PSforPDF{
 \psfrag{g6}{\Large $\cG_6$}
 \psfrag{h6}{\Large $\cH_6$}
 \psfrag{t1}{$\tau_1$-edges}
 \psfrag{t2}{$\tau_2$-edges}
 \psfrag{pn}{$P_n$}
 \psfrag{pnp}{$P_n'$}
 \psfrag{pnpp}{$P_n''$}
 \psfrag{6l}{6 leaves}
 \psfrag{5l}{5 leaves}
 \psfrag{4l}{4 leaves}
 \psfrag{3l}{3 leaves}
 \psfrag{2l}{2 leaves}
 \includegraphics{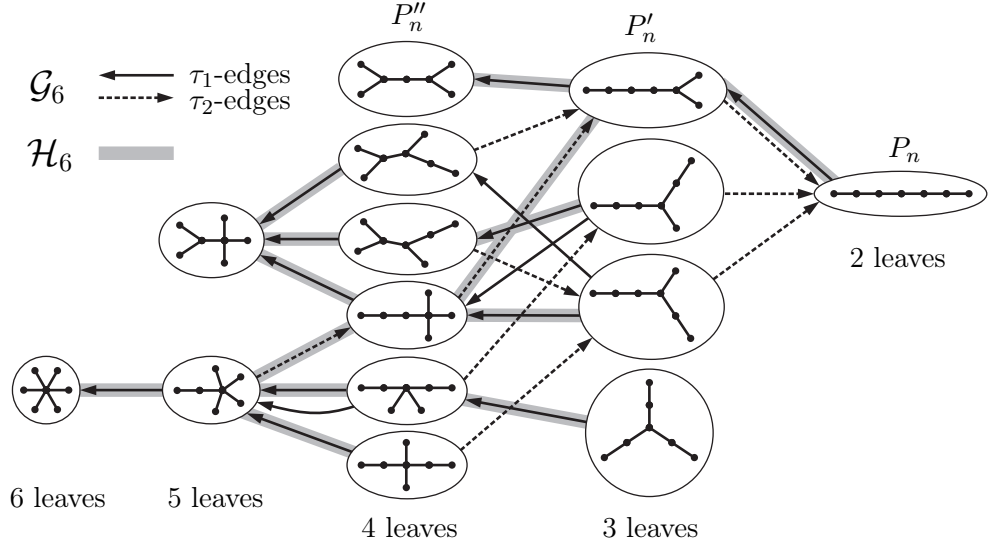}
}
\caption{The graphs $\cG_6$ and $\cH_6$ with nodes are arranged in levels according to the number of leaves of the corresponding plane trees. 
The figure shows for each node, i.e., for each set of ordered rooted trees that differ by rotation, the plane tree corresponding to these trees. 
In the figure, $\tau_1$-edges are drawn as solid lines, $\tau_2$-edges as dashed lines, and the edges of $\cH_6$ are highlighted in grey.}
\label{fig:gh6}
\end{figure}

\begin{lemma}
\label{lemma:spanning-tree}
For any $n\geq 1$, the graph $\cH_n$ is a spanning tree of $\cG_n$.
\end{lemma}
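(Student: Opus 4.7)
The plan is to reduce the claim to \cite[Lemma~5]{muetze:14}, which essentially asserts the existence of a spanning tree of $\cG_n$ obtained by canonically selecting one $\tau_1$- or $\tau_2$-edge per non-root node. Since $\cH_n$ and $\cG_n$ share the same vertex set by construction (each node being a rotation class of ordered rooted trees, equivalently a plane tree in $\cT_n$), the \emph{spanning} part of the statement is automatic. The entire argument therefore reduces to showing that the edges selected by the boolean predicates $\IsFlipTree_1$ and $\IsFlipTree_2$ coincide with those of the spanning tree produced in \cite[Lemma~5]{muetze:14}.

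I would proceed by first unpacking the construction from \cite[Lemma~5]{muetze:14}: it designates a distinguished reference plane tree (or a small set of them, cf.\ the trees $P_n,P_n',P_n''$ highlighted in Figure~\ref{fig:gh6}) as the root, and for every other node it picks a unique ``parent edge'' of type $\tau_1$ or $\tau_2$. Connectivity and the correct edge count $|\cT_n|-1$ then follow from that construction. The task becomes verifying, type by type, that our boolean predicates encode the same selection rule.

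For the $\tau_1$-edges this is the short half of the argument. The condition that $\IsFlipTree_1(T)=\true$ iff $T$ is lexicographically smallest among its rotations lying in $\cT_{n,1}^*$ picks, for each plane tree that admits some rotation in $\cT_{n,1}^*$, exactly one ordered rooted representative, hence exactly one $\tau_1$-edge of $\cG_n$ to retain in $\cH_n$. This is a straightforward canonical-representative choice and matches the lex-minimality convention used for $\tau_1$-edges in \cite[Lemma~5]{muetze:14}.

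The $\tau_2$-edges are where the real work lies, and I expect this to be the main obstacle. The definition of $\IsFlipTree_2(T)$ combines three ingredients: (a) excluding the special tree $(1)^{n-1}\circ(0)^{n-2}\circ(1,0,0)$ and any tree with more than one thin leaf, (b) reducing to the transformed tree $T'$ rooted at the clockwise-next leaf $w$, and (c) selecting rotations of $T'$ whose leftmost-leaf distance $d'$ is maximal, with lexicographic tie-breaking. Each of these conditions must be matched to the corresponding step of the spanning tree construction of \cite[Lemma~5]{muetze:14}, which relies on the bijection between lattice paths and ordered rooted trees (via $h$) and on the classification of flippable pairs in the sets $H_1,H_2$ from \cite[Eqs.~(86),(87)]{muetze:14}. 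The exclusion clauses correspond to neighbours of the designated root of the spanning tree; the distance-plus-lex criterion corresponds to the canonical selection among $\tau_2$-edges. Once I verify that these conditions produce precisely the edge set of the spanning tree from \cite[Lemma~5]{muetze:14}, the spanning tree property of $\cH_n$ is inherited, and the lemma follows.
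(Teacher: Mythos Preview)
Your proposal is a plan rather than a proof: you explicitly defer the $\tau_2$-case (``where the real work lies'') to a verification you never carry out, and the $\tau_1$-case is only sketched. Even granting that \cite[Lemma~5]{muetze:14} produces a specific spanning tree of $\cG_n$, showing that $\cH_n$ coincides with it edge by edge is not obviously easier than proving the spanning tree property directly---and you have done neither. The ``cf.'' before the lemma is only a pointer to an analogous statement; the present paper does not invoke that lemma in its proof.

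The paper's argument is direct and self-contained, and its structure is different from what you outline. First, it shows that the subgraph $\cH_n'\subseteq\cH_n$ consisting of the $\tau_1$-edges alone is a spanning forest: by the definition of $\IsFlipTree_1$, every node has outdegree at most~$1$ in $\cH_n'$ (exactly~$1$ iff the corresponding plane tree has a thin leaf), and $\cH_n'$ is acyclic because $\tau_1$ strictly increases the number of leaves. Hence the components of $\cH_n'$ are indexed by their sinks, which are exactly the plane trees without thin leaves. Second, it shows that the $\tau_2$-edges of $\cH_n$ connect these components into a tree, via two claims: (i) every component of $\cH_n'$ except one distinguished component $\cS_n=\{P_n,P_n',P_n''\}$ has exactly one node with an outgoing $\tau_2$-edge in $\cH_n$ (this is where the single-thin-leaf clause and the special exclusion $(1)^{n-1}(0)^{n-2}(1,0,0)$ in $\IsFlipTree_2$ enter), and (ii) the $\tau_2$-edges are acyclic on the set of components, proved via an integer potential $d(T)$ on sinks (the maximal distance between certain pairs of leaves) that strictly increases along every $\tau_2$-edge. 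That potential is exactly the distance parameter hard-wired into $\IsFlipTree_2$, so the ``maximise $d$, break ties lexicographically'' rule is not an arbitrary canonical choice to be matched against an external construction---it is what makes the acyclicity argument go through.

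In short, the paper gives an outdegree-plus-monotone-potential proof, and the specific form of $\IsFlipTree_2$ is tailored to that potential. Your reduction, even if it can be made to work, would have to rediscover this combinatorics inside the verification step you left open.
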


\begin{proof}
For $n=1$ and $n=2$ the graph $\cG_n$ consists only of a single node and no edges, and we have $\cG_n=\cH_n$, so in these cases the claim is trivially true. 
For $n=3$ the graph $\cG_n$ consists of two nodes, connected by a $\tau_1$-edge, and we again have $\cG_n=\cH_n$, proving the lemma also in this case. 
For the rest of the proof we assume that $n\geq 4$.

Let $\cH_n'$ denote the graph obtained from $\cH_n$ by removing all $\tau_2$-edges.
We first show that $\cH_n'$ is a spanning forest of $\cG_n$.

By the definition of the function $\IsFlipTree_1()$ from Section~\ref{sec:isflipvertex}, the outdegree of every node $T\in\cT_n$ in $\cH_n'$ is at most 1. 
More precisely, if $T$ has a thin leaf, then the outdegree is 1, as there is a way to root $T$ such that it is contained in the set $\cT_{n,1}^*$ (let $u$ denote a thin leaf, $u'$ the neighbor of $u$ and $v$ the neighbor of $u'$, then this rooted tree is given by $\troot(T,(v,u'))$, see the left hand side of Figure~\ref{fig:tau12}), and only for the lexicographically smallest such tree the function returns $\true$. 
On the other hand, it $T$ has no thin leaves, then the outdegree is 0, as there is no way to root $T$ such that it is contained in $\cT_{n,1}^*$, so the function returns $\false$ for all these trees (see Figure~\ref{fig:gh6}).
Moreover, as $\tau_1$ increases the number of leaves by one, the graph $\cH_n'$ is acyclic. 
An acyclic directed graph with outdegree at most 1 at every node is a forest, proving that $\cH_n'$ is indeed a spanning forest of $\cG_n$.

In the following we argue that adding the $\tau_2$-edges from $\cH_n$ to $\cH_n'$ turns the spanning forest into a spanning tree of $\cG_n$.

By what we said before the components (=trees) of the forest $\cH_n'$ can be characterized precisely: Every component has a unique sink, i.e., a node with no outgoing edges, which corresponds to a plane tree with no thin leaves. 
Also, the distance of a node $T\in\cT_n$ from the sink of the component it belongs to equals the number of thin leaves of $T$.
Moreover, we obtain that the number of components of $\cH_n'$ is given by the number of plane trees from $\cT_n$ that have no thin leaves. 
In Figure~\ref{fig:gh6}, $\cH_n'$ is obtained from $\cH_n$ by removing two dashed $\tau_2$-edges, resulting in three components whose sinks appear as the leftmost node of each component.

We now consider a component of $\cH_n'$ that will play a special role in the following.
Let $P_n\in\cT_n$ denote the path with $n$ edges, and let $(v_1,v_2,\ldots,v_{n+1})$ be the sequence of vertices along this path.
We denote by $P_n'\in\cT_n$ the graph obtained from $P_n$ by replacing the edge $(v_n,v_{n+1})$ by the edge $(v_{n-1},v_{n+1})$. 
Furthermore, we denote by $P_n''\in\cT_n$ the graph obtained from $P_n'$ by replacing the edge $(v_1,v_2)$ by the edge $(v_1,v_3)$.
Clearly, $\cH_n'$ has a component consisting of the three nodes $P_n$, $P_n'$ and $P_n''$, with a $\tau_1$-edge directed from $P_n$ to $P_n'$ and a $\tau_1$-edge directed from $P_n'$ to the sink $P_n''$, see the top part of Figure~\ref{fig:gh6}.
We denote this component of $\cH_n'$ by $\cS_n$. 
Note that the definition of $\cS_n$ is meaningful only for $n\geq 4$.

We proceed to show the following:
(i) In every component of $\cH_n'$ except $\cS_n$ there is exactly one node that has an outgoing $\tau_2$-edge in $\cH_n$, and the component $\cS_n$ has no node with an outgoing $\tau_2$-edge.
(ii) The $\tau_2$-edges of $\cH_n$ form an acyclic graph on the components of $\cH_n'$ (in particular, every $\tau_2$-edge of $\cH_n$ starts and ends in a different component of $\cH_n'$).
Combining our earlier observation that $\cH_n'$ is a spanning forest of $\cG_n$ with claims~(i) and (ii) clearly proves the lemma, so it remains to prove (i) and (ii).

Proof of (i):
Consider a component of $\cH_n'$ with sink $T\in\cT_n$, and let $N(T)$ be the set of all nodes in distance 1 from $T$ in $\cH_n'$. 
As $T$ has no thin leaves, it can be rooted so that it is an image of $\tau_1$: let $u$ and $u'$ be two leaves that have a common neighbor $v$ such that $u'$ is the counterclockwise-next leaf from $u$, then this rooted tree is given by $\troot(T,(v,u))$. 
See the left hand side of Figure~\ref{fig:tau12}.
The preimage is a tree with exactly one thin leaf, which corresponds to a node $T'\in\cT_n$ with a $\tau_1$-edge directed from $T'$ to $T$ that is present in the graph $\cH_n'$. 
It follows that the set $N(T)$ is nonempty. 
Note that the function $\IsFlipTree_2()$ returns $\true$ only if the given tree has a single thin leaf, so the only outgoing $\tau_2$-edges of $\cH_n$ that start from a node in the same component as $T$ start at a node in $N(T)$. 
Moreover, as any tree in $N(T)$ can be rooted uniquely so that it is a preimage of $\tau_2$ (let $v$ denote the thin leaf, $u$ the counterclockwise-next leaf from $v$, $u'$ the neighbor of $u$ and $r$ the neighbor of $u'$ next to $u$ in the counterclockwise ordering of the neighbors of $u'$, then this rooted tree is given by $\troot(T,(r,u'))$, see the right hand side of Figure~\ref{fig:tau12}), every node in $N(T)$ has exactly one outgoing $\tau_2$-edge in $\cG_n$, see Figure~\ref{fig:gh6}.
If the component containing $T$ is different from $\cS_n$, then the function $\IsFlipTree_2()$ returns $\true$ for exactly one of these rooted trees: The function first computes a rooted version of $T$ for each of them, thus consistently assigning an integer value to each of the trees (the integers are derived from distances between certain vertices of $T$, but the origin of these values is irrelevant for this part of the argument), and returns $\true$ only for the tree that was assigned the largest value and that is lexicographically smallest among all trees with the same value.
However, if the component containing $T$ is $\cS_n$ (i.e., $T=P_n$), then $N(T)=\{P_n'\}$ and the unique way to root $P_n'$ such that it is a preimage of $\tau_2$ is the tree $1^{n-1}\circ 0^{n-2}\circ 100$, for which the function $\IsFlipTree_2()$ exceptionally returns $\false$.
This exceptional rooted tree is not encountered in any component other than $\cS_n$.

\begin{figure}
\centering
\PSforPDF{
 \psfrag{gn}{\Large $\cG_n$}
 \psfrag{hn}{\Large $\cH_n$}
 \psfrag{t1}{$\tau_1$-edges}
 \psfrag{t2}{$\tau_2$-edges} 
 \psfrag{n2}{2}
 \psfrag{n5}{5}
 \psfrag{n6}{6}
 \psfrag{n7}{7}
 \psfrag{n8}{8}
 \psfrag{u}{$u$}
 \psfrag{up}{$u'$}
 \psfrag{v}{$v$}
 \psfrag{vp}{$v'$}
 \psfrag{vpp}{$v''$}
 \psfrag{w}{$w$}
 \psfrag{t}{$T$}
 \psfrag{dt}{$d(T)=\max\{2,5,6,7\}=7$}
 \psfrag{tp}{$T'$}
 \psfrag{dtp}{$d(T')\geq 8>d(T)$}
 \psfrag{th}{$\That$}
 \psfrag{thp}{$\That'$}
 \includegraphics{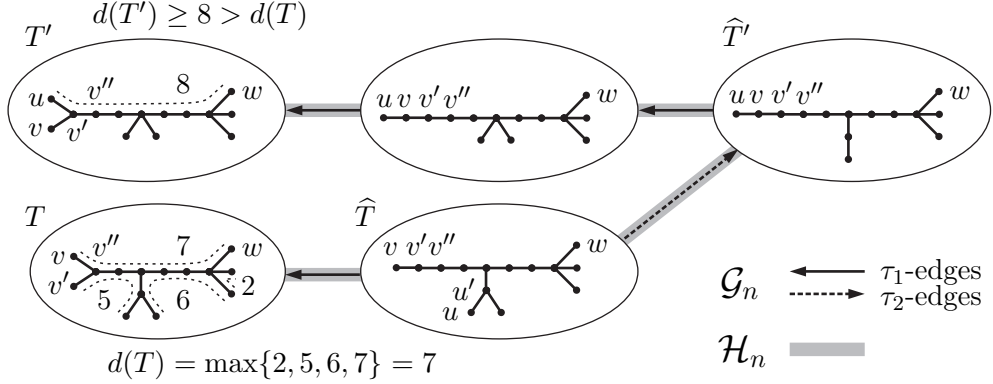}
}
\caption{Illustration of the notations used in the proof of Lemma~\ref{lemma:spanning-tree}.}
\label{fig:dT}
\end{figure}

Proof of (ii):
For the reader's convenience, this part of the proof is illustrated in Figure~\ref{fig:dT}.
For any plane tree $T\in\cT_n$, we define $d(T)$ as the maximum distance between any two leaves $v$ and $w$ that satisfy the following conditions: $w$ is the clockwise-next leaf from $v$ and the counterclockwise-next leaf $v'$ from $v$ has distance 2 from $v$.
In other words, the leaves $v$ and $v'$ are adjacent to a common vertex $v''$; see the bottom left part of Figure~\ref{fig:dT}.
Claim~(ii) is an immediate consequence of the following observation: For any $\tau_2$-edge from $\cH_n$ directed from a component of $\cH_n'$ with sink $T\in\cT_n$ to a component of $\cH_n'$ with sink $T'\in\cT_n$, we have $d(T)<d(T')$.
In particular, the edge starts and ends in a different component of $\cH_n'$.
To see this let $\That\in\cT_n$ be the node in distance 1 of $T$ in $\cH_n'$ that is the starting node of a $\tau_2$-edge of $\cH_n$ (this node exists and is unique by claim~(i)), and let $\That'\in\cT_n$ be the end node of this $\tau_2$-edge. 
As $\That$ has exactly one thin leaf, and as $\tau_2$ preserves one thin leaf and possibly creates one additional thin leaf, $\That'$ has distance 1 or 2 from $T'$. 
So in $\cH_n$ there is a $\tau_1$-edge directed from $\That$ to $T$, a $\tau_2$-edge directed from $\That$ to $\That'$ and either a single $\tau_1$-edge or a path consisting of two $\tau_1$-edges directed from $\That'$ to $T'$, see Figure~\ref{fig:dT}.

By the definition of $\IsFlipTree_2()$ and of the parameter $d(T)$, the tree $T$ has two leaves $v$ and $w$ in distance $d(T)$ such that $w$ is the clockwise-next leaf from $v$ and the counterclockwise-next leaf $v'$ from $v$ has distance 2 from $v$, implying that $v$ and $v'$ have a common adjacent vertex $v''$.
Furthermore $\That$ is obtained from $T$ by replacing the edge $(v,v'')$ by $(v,v')$.
The function $\IsFlipTree_2()$ of course computes a rooted version of $T$ from a rooted version of $\That$. 

By the definition of $\tau_2$, the tree $\That'$ is obtained from $\That$ by considering the counterclockwise-next leaf $u$ from $v$, its neighbor $u'$ in $\That$, and by replacing the edge $(u,u')$ by the edge $(u,v)$. 
By the definition of $\tau_1$, $T'$ is obtained from $\That'$ by replacing the edge $(u,v)$ by $(u,v')$ plus possibly another application of $\tau_1$ if $\That'$ has two thin leaves. 
It follows that the distance between $u$ and $w$ in $T'$ is $d(T)+1$, implying that $d(T')>d(T)$, as claimed.

This completes the proof.
\end{proof}

\begin{remark}
\label{remark:other-cycles}
Here is how our algorithm can be modified to compute another Hamilton cycle in the middle levels graph: The key is to modify the functions $\IsFlipTree_1()$ and $\IsFlipTree_2()$ such that they define a spanning tree of $\cG_n$ that is different from the `canonical' spanning tree $\cH_n$. 
Each different spanning tree of $\cG_n$ clearly yields a different Hamilton cycle in the middle levels graph $Q_{2n+1}(n,n+1)$ (cf.~\cite[Lemma~6]{MR3483129}).
The only limitation an efficient algorithm must obey is that the spanning tree must be locally computable, i.e., given information about the current node $T\in\cT_n$ of $\cG_n$, the algorithm must decide based on a polynomially sized neighborhood of $T$ which edges of $\cG_n$ to include and which to exclude from a spanning tree.
Note that the degree of the nodes of $\cG_n$ is $\cO(n)$ and the total number of nodes is $|\cT_n|=\Theta(4^n\cdot n^{-5/2})$.
\end{remark}

\section{Running time and space requirements of the algorithm}
\label{sec:running-time}

In this section we prove that our Hamilton cycle algorithm can be implemented so that it satisfies the runtime and space bounds claimed in the introduction. 
In the runtime analysis we apply some results about the paths $\cP_{2n}(n,n+1)$ that have been established in \cite{muetze-weber:12}.

\subsection{Running time}

A naive implementation of the function $\Paths()$ takes time $\cO(n^2)$: To see this observe that the membership tests whether $x^-$ is contained in one of the sets $D_{2n-2}^{=0}(n-1)$, $D_{2n-2}^-(n-1)$ or $D_{2n-2}^{>0}(n)$ in lines~\ref{line:xp00f}, \ref{line:xp01f1}, \ref{line:xp01f2}, \ref{line:xp01f3}, \ref{line:xp11f1} and \ref{line:xp11f2} and the application of the mappings $\ol{\rev}$ and $\pi$ in line~\ref{line:recurse-11} take time $\cO(n)$ (recall that $\ol{\rev}^{-1}=\ol{\rev}$ and $\pi^{-1}=\pi$), and that the value of $n$ decreases by 1 with each recursive call.
In the following we sketch how this can improved so that each call of $\Paths()$ takes only time $\cO(n)$.
More details can be found in the comments of our C++ implementation~\cite{www}.
For this we maintain counters $c_0$ and $c_1$ for the number of zeros and ones of a given bitstring $x=(x_1,x_2,\ldots,x_{2n})$. 
Moreover, interpreting the bitstring $x$ as a lattice path (as described in Section~\ref{sec:basic-defs}), we maintain vectors $c_{00}$, $c_{01}$, $c_{10}$, $c_{11}$ that count the number of occurences of pairs of consecutive bits $(x_{2i},x_{2i+1})$, $i\in\{1,2,\ldots,n-1\}$, \emph{per height level of the lattice path} for each of the four possible value combinations of $x_{2i}$ and $x_{2i+1}$. 
E.g., for the bitstring $x=1100001010$ the vector $c_{10}$ has a single nonzero entry 1 at height level (=index) 1 for the two bits $(x_2,x_3)=10$, the vector $c_{00}$ has a single nonzero entry 1 at height level (=index) 0 for the two bits $(x_4,x_5)=00$, and the vector $c_{01}$ has a single nonzero entry 2 at height level (=index) $-1$ for the pairs of bits $(x_6,x_7)=(x_8,x_9)=01$.
Using these counters, the three membership tests mentioned before can be performed in constant time.
E.g., a bitstring $x=(x_1,x_2,\ldots,x_{2n})$ is contained in $D_{2n}^{=0}(n)$ if and only if $c_0=c_1$ and $x_1=1$ and the entry of $c_{00}$ at height level 0 equals 0 (i.e., the lattice path never moves below the line $y=0$).
Moreover, these counters can be updated in constant time when removing the last two bits of $x$ and when applying the mappings $\ol{\rev}$ and $\pi$: Note that $\ol{\rev}$ simply swaps the roles of $c_0$ and $c_1$ and the roles of $c_{00}$ and $c_{11}$ and possibly incurs an index shift, and that $\pi$ simply swaps the roles of $c_{10}$ and $c_{01}$.
To compute the applications of $\ol{\rev}$ and $\pi$ in line~\ref{line:recurse-11} in constant time, we do not modify $x$ at all, but rather count the number of applications of $\ol{\rev}$ and $\pi$ and keep track of the middle range of bits of $x$ that is still valid.
When removing the last two bits of $x$, this range shrinks by 2 on one of the sides. 
Taking into account that multiple applications of $\ol{\rev}$ and $\pi$ cancel each other out, this allows us to compute the effect of applying those mappings \emph{lazily} when certain bits are queried later on (when testing the values of the last two bits of some substring of $x$).
E.g., by repeatedly removing the last two bits and applying the mappings $\ol{\rev}$ and $\pi$, the bitstring $x=(x_1,x_2,\ldots,x_{10})$ is transformed into $(\ol{x_8},\ol{x_6},\ol{x_7},\ol{x_4},\ol{x_5},\ol{x_2},\ol{x_3},\ol{x_1})$, $(x_2,x_4,x_5,x_6,x_7,x_8)$, $(\ol{x_6},\ol{x_4},\ol{x_5},\ol{x_2})$, and finally into $(x_4,x_6)$.

The function $\IsFlipVertex()$ can be implemented to run in time $\cO(n^2)$.
To see this observe that the result of applying the function $h^{-1}$ in line~\ref{line:h-inverse} can be computed in time $\cO(n^2)$, and that the functions $\IsFlipTree_1()$ and $\IsFlipTree_2()$ called in lines~\ref{line:tau_1-preimage}--\ref{line:tau_2-image} also need time $\cO(n^2)$: For both we need to rotate an ordered rooted tree with $n$ edges (=bitstring of length $2n$) for one full rotation, and each rotation operation takes time $\cO(n)$.

It was shown in \cite[Lemma~9]{muetze-weber:12} that the length of any path $P\in\cP_{2n}(n,n+1)$ with a first vertex $F(P)=:x\in D_{2n}^{=0}(n)$ is given by the following simple formula: Considering the unique decomposition $x=1\circ x_\ell\circ 0\circ x_r$ with $x_\ell\in D_{2k}^{=0}(k)$ for some $k\geq 0$, the length of $P$ is given by $2|x_\ell|+2\leq 2(2n-2)+2=4n-2$.
It follows that the while-loop in line~\ref{line:while-not-start} terminates after at most $\cO(n)$ iterations, i.e., the initialization phase of $\HamCycle()$ (lines~\ref{line:init-flip-start}--\ref{line:init-flip-end}) takes time $\cO(n^2)$.

It was shown in \cite[Theorem~10]{muetze-weber:12} that the distance between any two neighboring vertices of the form $x\circ 0$, $x'\circ 0$ with $x,x'\in D_{2n}^{=0}(n)$ ($x$ and $x'$ are first vertices of two paths $P,P'\in\cP_{2n}(n,n+1)$) on a cycle in \eqref{eq:2-factor} is exactly $4n+2$.
Comparing the lengths of two paths from $\cP_{2n}(n,n+1)$ that form a flippable pair with the lengths of the corresponding flipped paths from the set $\tcP_{2n}(n,n+1)$, we observe that either the length of one the paths decreases by $4$ and the length of the other increases by $4$, or the lengths of the paths do not change.
Specifically, the paths $P,P'$ and $R,R'$ defined in \eqref{eq:P423} and \eqref{eq:tP423} have exactly the length differences $-4$ and $+4$, and these differences only propagate through the first cases of the $\Paths()$ recursion, but not the last case in lines~\ref{line:xp11}--\ref{line:recurse-11}.
It follows that every call of $\HamCycleNext()$ for which the condition in line~\ref{line:check-flip} is satisfied and which therefore takes time $\cO(n^2)$ due to the call of $\IsFlipVertex()$, is followed by at least $4n-3$ calls in which the condition is not satisfied, in which case $\HamCycleNext()$ terminates in time $\cO(n)$. 
Consequently, $\ell$ consecutive calls of $\HamCycleNext()$ take time $\cO(n^2+n\ell)$.

Summing up the time $\cO(n^2)$ spent for the initialization phase and $\cO(n^2+n\ell)$ for the actual Hamilton cycle computation, we conclude that the algorithm $\HamCycle(n,x,\ell)$ runs in time $\cO(n^2+n\ell)=\cO(n\ell(1+\frac{n}{\ell}))$, as claimed.

\subsection{Space requirements}

The optimized variant of the algorithm $\Paths()$ discussed in the previous section requires space $\cO(n)$, e.g., to store the counting vectors $c_{00}$, $c_{01}$, $c_{10}$, $c_{11}$.
The functions $\IsFlipTree_1()$ and $\IsFlipTree_2()$ require storing only a constant number of ordered rooted trees with $n$ edges (=bitstrings of length $2n$) for lexicographic comparisons, so they also require space $\cO(n)$. 
Furthermore, it is readily checked that the additional space required by each of the calling functions up to the top-level algorithm $\HamCycle()$ is only $\cO(n)$. 
This proves that the total space required by our algorithm is $\cO(n)$.

\section{Acknowledgements}

We thank Günter Rote for persistently raising the question whether the proof of Theorem~\ref{thm:middle-levels} could be turned into an efficient algorithm. 
We also thank the referees of this paper and of the extended abstract that appeared in the proceedings of the European Symposium on Algorithms (ESA) 2015 for numerous valuable comments that helped improving the presentation of this work and the C++ code.

\bibliographystyle{alpha}
\bibliography{refs}

\end{document}